\definecolor{slg}{gray}{0.95}
\newcommand{\removelatexerror}{\let\@latex@error\@gobble}
\newcolumntype{C}{>{\centering\arraybackslash}X} 
\newtheorem{theorem}{Theorem}
\newtheorem{corollary}{Corollary}
\def\ie{{i.e.}}
\def\etal{\textit{et al.}~}
\def\FTX{\mathcal{F}_{\rm{TX}}}
\def\FRXN{{\mathcal{F}_{{\rm RX}{_n}}}}
\def\rxn{${\rm RX}_n$}
\def\tx{$\rm{TX}$}
\def\rxs{$\rm{RXs}$}
\begin{document}
\title{Self-Critical Alternate Learning based Semantic Broadcast Communication}
\author{Zhilin~Lu,~Rongpeng~Li,~Ming~Lei,~Chan~Wang, ~Zhifeng~Zhao,~and~Honggang~Zhang\vspace{-1cm}
    \thanks{
  Z.~Lu,~R. Li, M. Lei, and C. Wang are with Zhejiang University, China (email: \{lu\_zhilin, lirongpeng, lm1029, 0617464\}@zju.edu.cn);~Zhifeng Zhao and Honggang Zhang are with Zhejiang Lab as well as Zhejiang University, China (email: \{zhaozf,honggangzhang\}@zhejianglab.com). } 
}
	\maketitle

\begin{abstract}
Semantic communication (SemCom) has been deemed as a promising communication paradigm to break through the bottleneck of traditional communications. Nonetheless, most of the existing works focus more on point-to-point communication scenarios and its extension to multi-user scenarios is not that straightforward due to its cost-inefficiencies to directly scale the JSCC framework to the multi-user communication system. 
Meanwhile, previous methods optimize the system by differentiable bit-level supervision, easily leading to a ``semantic gap''.   
Therefore, we delve into multi-user broadcast communication (BC) based on the universal transformer (UT) and propose a reinforcement learning (RL) based self-critical alternate learning (SCAL) algorithm, named SemanticBC-SCAL, to capably adapt to the different BC channels from one transmitter (\tx) to multiple receivers (\rxs) for sentence generation task.  
In particular, to enable stable optimization via a non-differentiable semantic metric, we regard sentence similarity as a reward and formulate this learning process as an RL problem. Considering the huge decision space, we adopt a lightweight but efficient self-critical supervision to guide the learning process. Meanwhile, an alternate learning mechanism is developed to provide  
cost-effective learning, in which the encoder and decoders are updated asynchronously with different iterations. Notably, the incorporation of RL makes SemanticBC-SCAL compliant with any user-defined semantic similarity metric and simultaneously addresses the channel non-differentiability issue by alternate learning. Besides, the convergence of SemanticBC-SCAL is also theoretically established. Extensive simulation results have been conducted to verify the effectiveness and superiorness of our approach, especially in low SNRs.     
    
\end{abstract}

\begin{IEEEkeywords}
Semantic broadcast communication, self-critical reinforcement learning, non-differentiable channel, multi-user communications, semantic similarity. 
\end{IEEEkeywords}
	
\IEEEpeerreviewmaketitle
	
\section{Introduction}

 With the rapid development of communication technologies and the rise of artificial intelligence (AI), tremendous intelligent applications, such as extended reality (XR), holographic communication and autonomous driving, flourish and impose stringent communication requirements 
 \cite{wang2023road}, which are beyond the capabilities of 5G. In order to deal with this performance dilemma, researchers are resorting to reap the merits of AI, and deep learning (DL) enabled semantic communication (SemCom) emerges as one of the promising solutions. In particular, targeted at the semantic and/or effectiveness level communications \cite{weaver1949recent}, SemCom has attained intense research interest \cite{lu2022semantics} with encouraging results in text transmission \cite{farsad2018deep,xie2021deep,zhou2021semantic,sana2022learning,lu2021reinforcement}, speech/audio transmission \cite{weng2021semantic,tong2021federated, han2022semantic}, and image/video transmission \cite{bourtsoulatze2019deep, kurka2020deep, zhang2022wireless,tung2022deep,jiang2022wireless}.

  However, it's noteworthy that most of the literature predominantly focuses on point-to-point communication scenarios, while shedding little light on exploring one-to-many semantic broadcast communication (BC). For example, \cite{ding2021snr} proposes to estimate the signal-to-noise ratio (SNR) at the decoder before adaptively decoding transmitted images. \cite{liu2022attention} extends the unidirectional and bidirectional decoders to a degraded broadcast channel, and \rxs~therein decode the text in different languages yet with the same meanings as needed. Similarly, \cite{hu2022one} designs a semantic recognizer DistilBERT to distinguish different receivers according to emotional properties (\ie, positive or negative). Different from \cite{ding2021snr,liu2022attention,hu2022one} that broadcast all extracted features, the semantic encoder in \cite{ma2023features} extracts disentangled semantic features and only broadcasts the receiver's intended semantic information to further improve transmission efficiency. Apparently, these efforts on semantic BC lay the very foundation for resource-efficient transmission to multiple receivers requiring the same content \cite{shrivastava20225g}.

  Besides, existing semantic BC schemes  \cite{ding2021snr,liu2022attention,hu2022one,ma2023features,shrivastava20225g} continuously adopt joint source-channel coding (JSCC) to optimize the end-to-end (E2E) deep neural networks (DNNs) encompassing multiple receivers. Hence, it is cost-ineffective to directly scale the JSCC framework to multi-user communication systems (e.g., semantic BC), due to the awful size of parameters corresponding to exponentially increased receivers. Instead, an alternative framework, which is built on top of a state-of-the-art transformer DNN structure (e.g., universal transformer \cite{dehghani2018universal}) and capably separates the training of \tx ~and \rxs, is highly essential. In this regard, alternate learning between \tx ~and \rxs~sounds intuitive, but its convergence could be questionable. Furthermore, in existing works, the objective function commonly leverages a differentiable function, such as cross-entropy (CE) or mean square error (MSE), so as to facilitate direct gradient backpropagation. This kind of bit-level metric, despite being feasible in most cases, inevitably introduces a ``semantic gap'' \cite{lu2022semantics} due to inconsistencies between evaluation metrics and optimization objectives, \ie, the training stage is optimized by the differentiable CE or MSE function, while the test stage is evaluated by non-differentiable semantic metrics, such as BLEU, CIDEr \cite{lu2022semantics}. Meanwhile, the difficulty of dealing with the non-differentiability of semantic similarity metrics could be exaggerated by the large dimension of sequence decoding space, which demands a simple algorithm. In addition, most of these SemCom approaches impractically assume the channel is known and differentiable to enable direct backpropagation through the channel. It remains challenging to solve the non-differentiability issue in SemCom.

  In summary, the design of a semantic BC with one \tx ~and multiple $\rm{RXs}$ is still hindered by the following critical issues:   
\begin{enumerate}
     \item [\textit{Q1:}] \textit{(Scalability) How to design a cost-effective semantic BC framework? }
    \item [\textit{Q2:}] \textit{(Compatibility) How to compatibly deal with the non-differentiability of semantic similarity metrics as well as practical channel propagation, so as to develop a learning-efficient algorithm? } 
    \item [\textit{Q3:}] \textit{(Convergence) How to establish a theoretical guidance for alternate learning and calibrate the convergent hyperparameters correspondingly? }
\end{enumerate} 

    \begin{table*}[hb]
		\centering
		\caption{Summary and comparison of semantic BC schemes}    
		\label{comparison-BC}

		\begin{tabular}{m{1cm} m{7cm} m{7cm} } 		
			\hline
			Refs&Brief description&Limitations\\ 
			\hline
                \cite{ding2021snr}&It is an SNR-adaptive deep JSCC scheme to minimize image distortion for multiple users. &The objective function is bit level MSE function that lacks semantic level supervision and the JSCC-based optimization framework is not easy to scale to more users.  \\
			\cite{liu2022attention}&It extends the proposed unidirectional and bidirectional decoders to the degraded broadcast channel. & The broadcast case is only designed for two users and lacks scalability.\\ 
                \cite{hu2022one}&It leverages users' semantic features to receive the desired information. & The adaptability of different channels has not been discussed and analyzed.\\
			\cite{ma2023features}&It is a feature-disentangled semantic BC framework. &The theoretical convergence has not been further analyzed. \\  	
			\hline
		\end{tabular}
\end{table*}
  In this paper, we jointly address these critical yet challenging issues and propose a convergent semantic BC framework based on self-critical alternate learning (SCAL), named SemanticBC-SCAL. In particular, SemanticBC-SCAL capably adapts to different channel environments from one \tx~ to multiple \rxs ~and competently solves the non-differentiability issue in the objective function (i.e., semantic similarity metrics) by incorporating RL techniques that regard the semantic similarity as the reward. Meanwhile, considering the difficulty of handling large semantic decision space, SemanticBC-SCAL utilizes a lightweight self-critical supervision \cite{luo2020better}, which capably provides a simple and expedited baseline estimation with low variance,  
  to yield a stable and efficient gradient policy. Furthermore, SemanticBC-SCAL is equipped with an asynchronous alternate learning mechanism to unify \tx~\& \rxs, swiftly adapting to different numbers of \rxs, and also addresses the non-differentiable channel. The primary contributions of our work are summarized as follows: 
 
\begin{itemize}
    \item We propose a semantic BC framework, named SemanticBC-SCAL, comprising one \tx ~and multiple \rxs, built upon the universal transformer (UT) architecture \cite{dehghani2018universal}, the superiority of which has been widely validated in \cite{xie2023semantic,zhou2021semantic}. Notably, the decoders in SemanticBC-SCAL share a consistent structure with trained parameters, which could be leveraged for further adaptive learning. Thus, it allows the trained decoder model to readily expand to additional \rxs ~at a reasonable cost, effectively addressing the \textit{Q1}.  

    \item Within this proposed SemanticBC-SCAL framework, in order to enable stable optimization via a non-differentiable semantic similarity metric, we regard semantic similarity as a reward and formulate it as an RL problem, which is compliant with any semantic metric. Meanwhile, we adopt self-critical supervision to provide simple and efficient gradient estimation. This becomes especially suitable for complex SemCom systems with large semantic decision space and partially addresses the mentioned \textit{Q2}.  

    \item Combining the above approaches, a self-critical based alternate learning mechanism is devised to alternately train the encoder at \tx~ and multiple decoders at \rxs, as the conventional supervised training can not directly perform backpropagation through non-differentiable channels.
     In particular, under self-critical supervision, when the \tx ~remains fixed, multiple \rxs ~locally update their parameters according to their unique channel properties. After certain iterations, the \rxs ~are fixed, and the \tx ~updates itself by aggregating the information of multiple \rxs ~uniformly. This alternate learning mechanism just addresses the non-differentiable channel in \textit{Q2}. Moreover, we carefully investigate the theoretical convergence and validate the performance. The results show that the \rxs ~can adapt to different channels, and yield superior performance compared to traditional communication method \cite{reed1960polynomial} and JSCC-based semantic model \cite{xie2021deep}. This addresses the mentioned \textit{Q3}. 
 
\end{itemize}
 
 The rest of this paper is mainly organized as follows. Section \ref{sec: Related Works} briefly explains the recent works and clarifies the novelty of our work. In Section \ref{sec: System Model and Problem Formulation}, we introduce the preliminaries of the semantic BC model and formulate the optimization problem. In Section \ref{sec: SemanticBC-SCAL Scheme with Alternating Update Mechanism}, we describe the alternate learning mechanism and self-critical supervision of the SemanticBC-SCAL framework and analyze the theoretical convergence. In Section \ref{sec: Performance Evaluation}, the numerical results are illustrated and discussed. Finally, 
 conclusions are drawn in Section \ref{sec: Conclusions}.

\section{Related Works }
\label{sec: Related Works}
  SemCom primarily concentrates on the source information reconstruction based on the received information. For text-based SemCom systems, Farsad \etal \cite{farsad2018deep} propose the initial JSCC based on the recurrent neural network (RNN) codecs for fixed length sentences, which shows that the DL-based codecs are capable of achieving lower word error rate and preserve semantic information by embedding sentences into a semantic space. Inspired by this success, Xie \etal \cite{xie2021deep} design a transformer-based SemCom system, named DeepSC, which aims to recover the semantics for arbitrary lengths of sentences under varying channels. Since then, SemCom has attracted a lot of attention and many works utilize JSCC frameworks to transmit sentence semantics \cite{zhou2021semantic, xie2020lite,sana2022learning}, such as UT \cite{dehghani2018universal} based semantic coding system \cite{zhou2021semantic,xie2023semantic}, and mutual information (MI) based performance optimization between semantic compression and semantic fidelity. 
  Moreover, Wang \etal \cite{wang2022performance} and Seo \etal \cite{seo2021semantics} adopt knowledge graph (KG) and probability distribution respectively to infer semantics information explicitly.
  In addition to text, Weng \etal \cite{weng2021semantic} design an attention-based end-to-end (E2E) SemCom system for speech transmission. Tong \cite{tong2021federated} \etal develop a federated learning (FL) trained model to deliver audio semantics between multiple devices and the server. Besides, Bourtsoulatze \etal \cite{bourtsoulatze2019deep} initially extend the JSCC framework and apply it to the deep image SemCom system, which provides graceful degradation as SNR changes. Afterwards, Kurka \etal in \cite{kurka2020deep} consider the channel feedback to enhance the image reconstruction quality based on the JSCC scheme. 
  Moreover, the authors in \cite{tung2022deep} and \cite{jiang2022wireless} introduce the semantics to maximize overall visual quality by transmitting some key motion points, so as to further improve transmission efficiency.

  Apart from the achievements in semantic level SemCom, some results are achieved at the effectiveness level as well, so as to transmit essential semantics timely for successful tasks execution. For example, Jankowski \etal \cite{jankowski2020deep,jankowski2020wireless} target the identification of individuals or vehicles and design two alternative (digital and analog communications-based) schemes to enhance retrieval accuracy. 
  Kountouris \etal \cite{kountouris2021semantics} develop a communication paradigm, in which 
  smart devices sample the ``semantics of information''  to steer their traffic for the purpose of real-time remote actuation.
  Additionally, in\cite{xie2022task}, a transformer-based framework is designed to unify the structure of the transmitter for different tasks with different types of data (single-modal data and multi-modal data), wherein many-to-many and many-to-one scenarios are exemplified. Similarly, \cite{zhang2022unified} also presents a unified E2E framework that can serve different tasks with multiple modalities by dynamically adjusting feature dimensions and DNN layers.  
  Commonly, existing works in SemCom focus more on point-to-point or multiple-encoder-single-decoder JSCC frameworks. Furthermore, as explained earlier, there emerge several works \cite{ding2021snr,liu2022attention,hu2022one,ma2023features} on Semantic BC, and the related drawbacks are summarized in Table \ref{comparison-BC}. Notably, these semantic BC works tend to encounter challenges related to scalability, compatibility, and convergence issues.

\begin{table}
    \centering
    \caption{Notions used in this paper.}
    \label{notation}
    \begin{tabular}{m{2.3cm} m{5.5cm}}
    \toprule   
    Notation      & Definition       \\
    \midrule
    $N$ & The number of receivers\\
    $\FTX, \FRXN$ &  The abbreviated terms for transmitter and receiver $n$ ($n\sim N$) respectively\\ 
    ${\boldsymbol{m}}, {{{\boldsymbol{\hat m}}}_n}$  &   Input message of \tx~ and decoded message of \rxn  \\
    $\Theta $  &  Semantic similarity metric \\ 
    ${\theta _{{\rm{en}}}}, {\theta _{\rm{q}}}$ & Parameters for semantic encoder and quantization layer\\
    ${\phi _{n,{\rm{de}}}}, {\phi _{n,{\rm{dq}}}}$ & Parameters for semantic decoder and dequantization layer\\
    $S{C_{{\rm{en}}}}( \cdot ), S{C_{n{\rm{,de}}}}( \cdot )$ & Semantic encoder and semantic decoder $n$ respectively \\
    $S{C_{{\rm{TX}}}}( \cdot ), S{C_{n{\rm{,RX}}}}( \cdot )$ & Semantic coding for \tx, ~and semantic decoding for \rxn~  \\
    $Q( \cdot ), Q_n^{ - 1}( \cdot )$ & Quantization and dequantization layer respectively \\
    $\theta, {\phi _n}$&  Parameters for \tx~ and \rxn ~respectively \\
    $T$, ${{\hat T}_n}$ & The length of input message for \tx~ and output message for \rxn ~respectively\\
    $D$ & Dimension of message embedding \\
    ${{\cal H}_n}( \cdot )$  & Random channel\\
    ${\boldsymbol{b}}, {{{\boldsymbol{\hat b}}}_n}$ &Quantization coding and dequantization coding for \rxn\\
    $B$ & Quantified bits for each word\\
    ${W_m}$  & The dictionary\\ 
    $V$  & Dimension of the dictionary\\ 
    ${p_n}( \cdot )$ & Probability of choosing a single word of semantic decoder $n$\\
    $K$ & Number of parallel samples\\
    ${{{\boldsymbol{\hat m}}}_{n,i}}$ & The $i$-$\rm{th}$ complete decoding trajectory for \rxn \\
    ${r_n^{(t)}}$ & Reward at step $t$ for semantic decoder $n$\\
    ${s_n^{(t)}}$ & State at step $t$ for semantic decoder $n$\\
    $s_{n,{\rm{de}}}^{(t)}$ & State at step $t$ for semantic decoder $n$\\
    $\mathcal{S}, \mathcal{A}$ & The state space and action space\\
    $P$ & State transition probability function\\
    $R$ & Reward space \\ 
    $\gamma $ & Discounted factor \\ 
    $\kappa$ & Local iterations for decoders in each cycle \\
    \bottomrule
    \end{tabular}
\end{table}

 In addition, there exist some efforts to combine RL and SemCom to improve the communication efficiency of multi-agent systems \cite{yuan2021graphcomm, yun2021attention,tung2021effective,wang2022performance,li2021task}. 
 For example, in order to alleviate information bottleneck (IB) and improve the effectiveness of the multi-round communication process, \cite{yuan2021graphcomm} applies an attention-based sampling mechanism to the graph convolution network for multi-agent cooperation.  
 Different from the aforementioned works that utilize SemCom for effective transmission, \cite{li2021task} considers an RL based semantic bit allocation to implement task-driven semantic coding for a traditional hybrid coding framework. Specifically, they design semantic maps for different tasks to extract the pixel-wise semantic fidelity and leverage RL to integrate the semantic fidelity metric into the in-loop optimization of semantic coding. It can be seen that RL is capable of improving communication efficiency and has demonstrated considerable advantages. Nevertheless, there are relatively few works using RL to develop an in-depth optimization for the SemCom system, despite its potential effectiveness in addressing scalability and compatibility issues.

 \section{System Model and Problem Formulation} 
 \label{sec: System Model and Problem Formulation}  
In this section, we introduce the basic UT \cite{dehghani2018universal} based semantic BC model and also give a brief description of related semantic metrics for performance evaluation. Beforehand, we summarize mainly used notations in Table \ref{notation}. 

\subsection{System Model} 
\label{sec: system_model}

As illustrated in Fig. \ref{architecture}, the proposed BC system mainly includes three parts, that is, one \tx, $N$ random channels and $N$ \rxs. Generally speaking, at \tx ~side, the message ${\boldsymbol{m}}$ is first converted to an embedding by the embedding layer. Then the embedding is sent to implement semantic encoding, including a semantic encoder for features extraction and a dense layer for channel encoding. Then, the quantization layer quantifies ${\boldsymbol{x}} $ as $\boldsymbol{b}$ to facilitate the subsequent transmission and this extracted valuable information is broadcasted through different physical channels to reach \rxn ~($n\sim N$)\footnote{The $\sim$ operator implies $n \in \{ 1,2,\cdots N \}$.}.  
Likewise, at \rxn ~side, the dequantization layer detects and reconstructs the received symbol ${{\boldsymbol{\hat b}}_n}$ from the received noisy signals, 
and the semantic decoding, \ie, including channel decoder and semantic decoder, estimate and recover the source as ${{{\boldsymbol{\hat m}}}_n}$ based on the knowledge base (KB). 
Different from the classic SemCom \cite{xie2021deep}, we adopt the UT \cite{dehghani2018universal} based semantic encoder/decoders to achieve semantic coding/decoding. Besides, \rxs ~share the same structure but have individual DNN parameters accommodating to channel properties, so as to recover the original information as accurately as possible.

  	\begin{figure*}[ht]
		\centering
  \setlength{\abovecaptionskip}{-0.2cm}
		\includegraphics[width=0.935\linewidth]{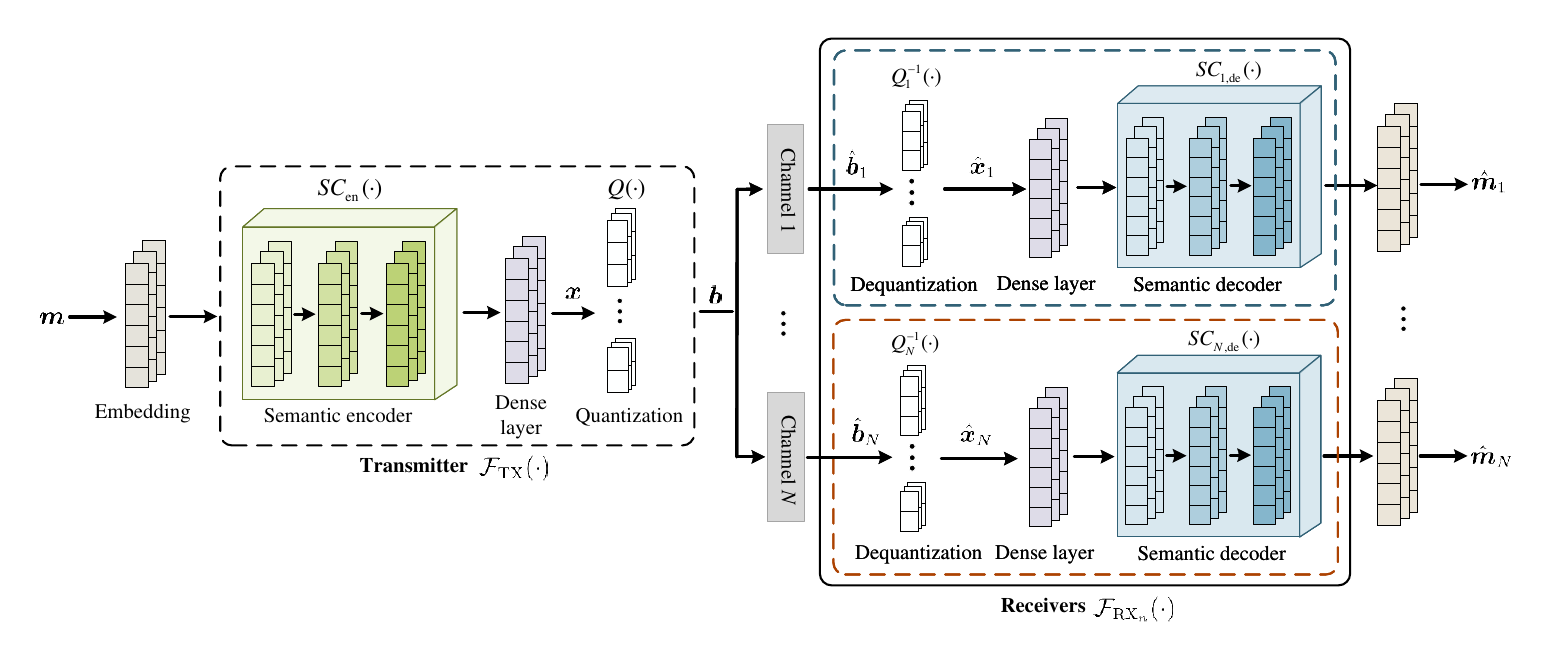}\\
		\caption{The system model for semantic BC system.} 
		\label{architecture} 
	\end{figure*}	
 
Mathematically, for a source message ${\boldsymbol{m}} = \{ {w^{(1)}},\cdots,{w^{(i)}},\cdots,{w^{(T)}}\} $, where $T$ is the sequence length, and ${w^{(i)}}$ can be regarded as a word from dictionary ${W_m}$ (\ie, $w^{(i)} \in W_m$). The encoding process $\FTX\left(  \cdot  \right)$ (parameterized by $\theta$) for \tx ~consists of semantic encoding $S{C_{{\rm{en}}}}\left(  \cdot  \right)$ and quantization $Q\left(  \cdot  \right)$. The semantic encoding is given by  
\begin{equation}
    \label{eq:encoding}
    {\boldsymbol{x}} = \emph{SC}_{{\rm{en}}}(\boldsymbol{m} ;{\theta_{\rm{en}}})
\end{equation}
where ${\boldsymbol{x}} \in {\mathbb{R}}^{T \times D}$, $D$ is the dimension of each symbol in ${\boldsymbol{x}}$, $S{C_{{\rm{en}}}}( \cdot )$ is the combination of semantic encoder and channel encoder parameterized by $\theta_{\rm{en}}$.

After that, the embedding ${\boldsymbol{x}} $ is quantified into discrete bits by a $\theta _{\rm{q}}$-induced quantization $Q( \cdot )$ as

\begin{equation}
    \label{eq:quantization} 
    {\boldsymbol{b}} = Q({\boldsymbol{x}};{\theta _{\rm{q}}})  
\end{equation}
where ${\boldsymbol b} \in {\mathbb{R}}^{T \times B}$, $B$ is the number of bits for each quantified symbol of ${\boldsymbol{x}} $. Then \tx ~broadcasts ${\boldsymbol{b}}$ to multiple physical channels. 

At the \rxn ~side, different from the point-to-point transmission with a single receiver, the transmitted bits in semantic BC experience heterogeneous channel conditions corresponding to different receivers.  
Analogously, the decoding process $\FRXN (\cdot)$ (parameterized by $\phi_n$) for \rxn ~involves the reverse steps, including dequantization and semantic decoding, which are represented as $Q_n^{ - 1}( \cdot )$ and $S{C_{n,{\rm{de}}}}( \cdot )$ respectively. With the aid of channel state information (CSI), the semantic decoder can be dynamically adjusted to achieve adaptive decoding. As such, the decoding message ${{\boldsymbol{\hat m}}_n}$ of \rxn ~can be denoted as:  
\begin{equation}
    \label{eq:decoding}
    {{{\boldsymbol{\hat m}}}_n} = S{C_{n,{\rm{de}}}}\left( {{Q_n^{ - 1}}({{\cal H}_n}({\boldsymbol{b}});{\phi _{n,{\rm{dq}}}});{\phi _{n,{\rm{de}}}}} \right)
\end{equation}
 where ${{\cal H}_n}( \cdot )$ ($n\sim N$) is the channel layer and is usually modeled as additive white Gaussian noise (AWGN) channel or multiplicative Rayleigh fading channel \cite{bourtsoulatze2019deep}.

   As such, \rxn ~recovers the source message by predicting the probability distribution of the next word in the dictionary. In the end, a complete decoded message can be denoted as $ {{{\boldsymbol{\hat m}}}_n}=\{\hat w_n^{(1)},\hat w_n^{(2)},\cdots,\hat w_n^{{({\hat T_n}})}\}$, where ${{\hat T_n}}$ is the length of decoded message for \rxn.

 \subsection{Problem Formulation}  
 \label{sec: Problem Formulation}

    Given the described system model, our goal is to maximize semantic similarity for \rxn ~by learning the optimal parameters for the system $\left\langle {{{\cal F}_{{\rm{TX}}}},{{\cal F}_{{\rm{R}}{{\rm{X}}_n}}}} \right\rangle $, that is,
    
\begin{equation}
    \label{eq: optimal system}
    \left\langle {{\theta ^ * },\phi _n^ * } \right\rangle  = \mathop {\arg \max }\limits_{{{\cal F}_{{\rm{TX}}}},{{\cal F}_{{\rm{R}}{{\rm{X}}_n}}}} \Theta ({\boldsymbol{m}},{{{\boldsymbol{\hat m}}}_n})
\end{equation}
where $\Theta ( \cdot )$ can be any reasonable semantic similarity metric, and we do not assume its differentiability. Therefore, in this paper,  
     bilingual evaluation understudy (BLEU) scores \cite{papineni2002bleu}, BERT (bidirectional encoder representation from transformers) \cite{devlin2018bert} based similarity metric (i.e., BERT-SIM) and word accuracy rate (WAR) are adopted to measure the degree of consistency between the input $\boldsymbol{m}$ and output ${{{\hat{\boldsymbol m}}}_n}$ \cite{xie2021deep, lu2022semantics}.   

    Nevertheless, the problem defined in \eqref{eq: optimal system} for semantic BC is challenging to solve by traditional JSCC frameworks. The difficulties are twofold. On one hand, in order to address the inherent ``semantic gap'', the commonly used differentiable objective will no longer be applicable since the gradient cannot be directly propagated backwards through the channel. 
    On the other hand,  along with the number of \rxs ~ increases, the scalability issue emerges, primarily as the commonly used point-to-point optimization method becomes invalid and untenable. Furthermore, to alleviate the ``semantic gap'', we formulate \eqref{eq: optimal system} as an RL problem by regarding non-differentiable semantic similarity as a reward. Besides, we adopt an alternate learning mechanism to cope with the scalability issue by iteratively updating the encoder and decoders, in which the encoder and decoders are all conceptualized as independent agents. 

    \textit{Alternate learning mechanism:} Specially, upon freezing the parameters of $\FTX$ (\ie, $\theta $), $\FRXN$ is locally updated $\kappa $ iterations. Conversely, when $\FRXN$ (\ie, ${\phi _n}$) is held constant, $\FTX$ is updated once independently by averaging the decoders' results. It is worth emphasizing that the encoder updates once, while each
    decoder locally updates $\kappa$ iterations, collectively forming a single \textit{update cycle}.
    Corresponding to this alternate learning mechanism, the objective functions for the \tx ~and \rxn ~diverge intrinsically. In this regard, the overall optimization objective outlined in \eqref{eq: optimal system} is formulated into the following two parts    
\begin{subequations} 
\label{eq: optimal}
	\begin{align}
            &{\theta ^ * } = \mathop {\arg \max }\limits_{\FTX( \cdot )} \Theta ({\boldsymbol{m}},{\rm{avg}}(\underbrace {\FRXN(}_{{\rm{no~grad}}} {\cal H}_n(\FTX({\boldsymbol{m}})))))\label{eq:encoder_objective}\\ 
		&\phi _n^ *  = \mathop {\arg \max }\limits_{\FRXN( \cdot )} \Theta ({\boldsymbol{m}},\FRXN( {\cal H}_n (\underbrace {\FTX({\boldsymbol{m}})}_{{\rm{no~grad}}})))  \label{eq:decoder_objective}            
	\end{align}
\end{subequations}
where $\rm{avg}(\cdot)$ is to get the mean value.

In this paper, we talk about how to alternately optimize (\ref{eq: optimal}a) and (\ref{eq: optimal}b) on top of RL, the details of which shall be presented in the next section.

 \section{SemanticBC-SCAL Scheme with Alternating Learning Mechanism } 
 \label{sec: SemanticBC-SCAL Scheme with Alternating Update Mechanism}
 In this section, we first elaborate on how to formulate a Markov decision process (MDP) to facilitate the application of RL, optimizing the semantic BC system with semantic level supervision. Afterwards, we talk about the self-critical alternate learning based algorithm, SemanticBC-SCAL, to produce a solution. Furthermore, we also investigate its convergence.  
 \subsection{The Markov Decision Process (MDP) Framework}  
 \label{sec: MDP}
 
 Following this multi-user semantic BC system, where the encoder can be formulated as one agent at \tx ~and the decoders can be modelled as $N$ independent agents at \rxs, which interacts with the environment (\ie ~the text dictionary $W_m$) by taking actions on the basis of states and receiving the rewards from the environment.  
 
 Without loss of generality, the MDP setting can be denoted as $(\mathcal{S},\mathcal{A}, P, R,\gamma)$, where $\mathcal{S}$ and $\mathcal{A}$ is the state space and the action space respectively, $P$ is the state transition probability, and $R$ denotes reward function with discounted factor $\gamma  \in (0,1]$. 
 Since the actions (\ie, word generation process) are implemented at the decoders, the state space and action space are determined by \rxn~and shared for the encoder. As such, we define the state space as $\mathcal{S} = \{ {s_1},\cdots, {s_n},\cdots, {s_N}\}, n\sim N$ and action space (\ie, vocabulary) as $\mathcal{A}=\{{w_1},\cdots, {w_n},\cdots, {w_N}\},{w_n} \in  W_m$ (where the dimension of dictionary ${W_m}$ is $V$). Moreover, the state transition function as $P = \{{P_{{\pi _{{\theta}}}}}, {P_{{\pi _{{\phi _n}}}}} \}$, and the reward for \tx ~and \rxn ~ are distinctly parameterized by ${\pi _\theta }$ and ${\pi _{{\phi _n}}}$ respectively, which are further delineated as $ R = \{ {R_{{\pi _\theta }}},{R_{{\pi _{{\phi _n}}}}}\} $.

    In particular, the MDP can be detailed as follows. 
\begin{itemize}
     \item \emph{Action}. The definition of action is to generate the next token $a_n^{(t)} = \hat w_n^{(t)}$ from the dictionary $W_m$ at \rxn, where the sequence generation process starts with token ``<SOS>'' and ends with ``<EOS>''. In contrast to traditional communication where there exists a one-to-one relationship among the source message $\boldsymbol{m}$, SemCom introduces a scenario where multiple pairs of interpretations can exist between them. Hence, the state-action space can be larger than traditional methods. 
    \item \emph{State.} We define the state as the sequential state of decoder $n$ at step $t$ and decoded tokens (\ie, words in the text tasks) so far. Furthermore, for each \rxn, $s_{n,{\rm{de}}}^{(t)}$ can be viewed as a combination of previous decoder output and buffer generation (words) so far, \ie, $s_n^{(t)} = \{ s_{n,{\rm{de}}}^{(t)},\hat w_n^{(1)},\hat w_n^{(2)},\cdots,\hat w_n^{(t)}\} \in \mathcal{S} $. It merits emphasis that our goal is to interpret messages at the semantic level, which may lead to variations in length between the decoded sentences and their original counterparts. Here, for different \rxn, the maximum length of decoded sentence is ${{\hat T}_n}$.  
    Meanwhile, once the next token $\hat w_n^{(t + 1)}$ is generated, the state transition is deterministic between any adjacent state.    
    \item \emph{Policy.} Notably, the policies for \tx ~and \rxs~ shall be independently calibrated, since the semantic encoder at \tx ~aims to optimize the encoding process in \eqref{eq:encoder_objective}, while the semantic decoder at \rxn ~is responsible for generating semantically accurate messages by maximizing \eqref{eq:decoder_objective}. 
   
    Specifically, for \tx ~side, the semantic encoding of the encoder is a continuous one and only needs to be implemented once for each sample. 
    In this sense, we model this process as a continuous Gaussian distribution with the mean value $\boldsymbol{\mu} ={\FTX(\boldsymbol{m})} \in \mathbb{R}^{1 \times D}$, and the covariance matrix $\Sigma ={(\sigma \boldsymbol{I})^2}\in \mathbb{R}^{D\times D}$ (where $\sigma $ is typically set to 0.1). In other words, some Gaussian noise is intentionally added to the semantic encoding process, to encourage a certain  exploration. Mathematically, the encoder policy can be written as  
\begin{equation}
    \label{eq:encoder-strategy}  
		\begin{aligned}                         
              {\pi _{\theta }}: = \text{Sample} \left({\cal N} \left(\boldsymbol{\mu } = {\FTX(\boldsymbol{m})},\Sigma  = {(\sigma {\boldsymbol{I}})^2} \right)\right)
		\end{aligned}
\end{equation}
    where $\mathcal{N}$ indicates a Gaussian distribution.

    For \rxn, instead of one deterministic distribution, the input of the decoder is decoded by $K$ times and the output likelihood is modelled as a probabilistic multinomial distribution (termed as $ \mathcal {MD}$) under self-critical supervision (it will be detailed in Section \ref{sec: Self-critical Optimization under Alternate Learning Mechanism}), which means \rxn ~can obtain a group of different output from $K$ views, so as to better reap the potentially diversified semantics of one same sentence. As such, sampling from this kind of multinomial distribution ensures the collection of samples with the maximum likelihood from multi-view. 
    For each multinomial distribution, it is modelled by the ``softmax'' function that is parameterized on the ${\phi _n}$. Then for step $t$, one has 
\begin{equation}
    \label{eq:decoder-strategy}
               \pi _{{\phi _n}}^t: = \text{Sample} ( \mathcal {MD}([p_n(\hat w_1^{(t)}),p_n(\hat w_2^{(t)}),\cdots,p_n(\hat w_V^{(t)})]))
\end{equation}
where the superscript $t$ means the token time-step for sampling operation.

    \item \emph{State transition probability.} According to the above-mentioned definitions, the state transition function $P$ is deterministic between two adjacent states. Formally, we have  ${P}({s_{n}^{(t + 1)}|s_{n}^{(t)},\hat w_n^{(t)}} ) = 1$.  

    \item \emph{Reward.} We use the semantic similarity as a reward and denoted as $\Theta ({\boldsymbol{m}},{{\boldsymbol{\hat m}}_n})$. In the sentence generation model, the reward can only be obtained until the last token is generated. Therefore, the reward is sparse in SemanticBC-SCAL. In other words, the intermediate reward is always zero except for the last time step ${{\hat T}_n}$, that is,
\begin{equation}
    \label{eq:reward}
    	{r_n^{(t)}} = \left\{
		\begin{aligned}
			0\quad\quad		                    &\quad \textrm{if}  \   t \neq {{\hat T}_n} \\
			\Theta ({\boldsymbol{m}},{{\boldsymbol{\hat m}}_n}) 	 	&\quad \textrm{if} \   t = {{\hat T}_n} 
		\end{aligned}
		\right. 		
\end{equation}   
    
\end{itemize}

For the sake of simplification, we assume the discounted factor $\gamma  = 1$, so without loss of generality, the \textit{return} for our system is formulated as: 
\begin{equation}
    \label{eq:accumulative return}
        G_n^{(t)} = \sum\limits_{k = t + 1}^{{{\hat T}_n}} {{\gamma ^{k - t - 1}}r_n^{(k)}}= \sum\limits_{t = 1}^{{{\hat T}_n}} {r_n^{(t)}}  		 		
\end{equation}

\begin{figure*}[hbt]
   \centering
   \setlength{\abovecaptionskip}{-0.2cm}
		\includegraphics[width=0.88\linewidth]{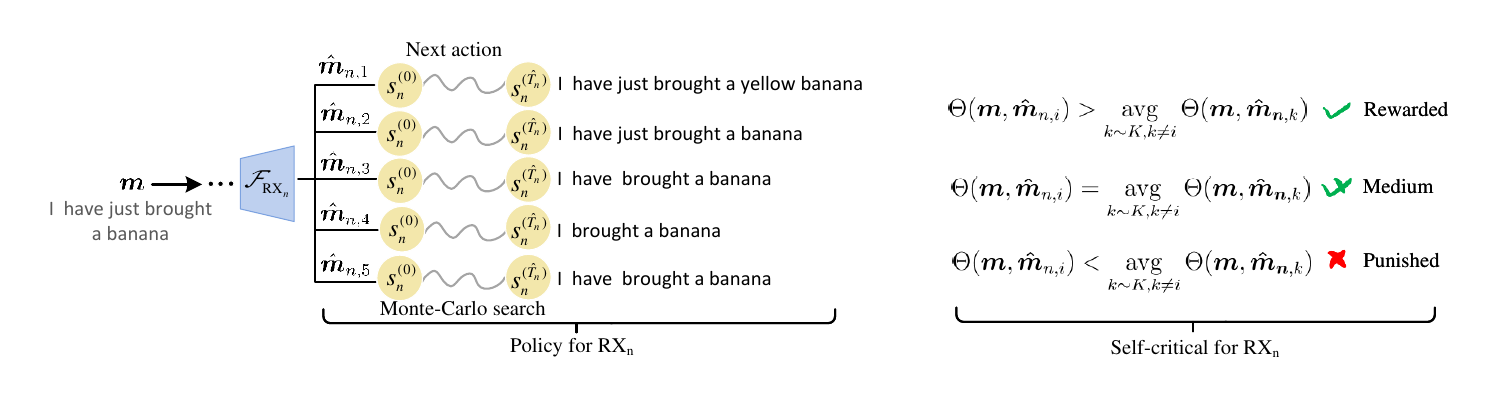}\\ 
		\caption{Illustration of the self-critic optimization for decoder side at \rxn.}
		\label{self-critic}
\end{figure*}

Accordingly, the state-value function is given by $V(s_n^{(t)}) = \mathbb{E}\left[ {G_n^{(t)}|s_n^{(t)}} \right]$. 
Consistent with the reformulated problem in (\ref{eq: optimal}), at decoder sides, the objective function for \rxn ~is to maximize the value function, which can be viewed as maximizing \textit{return} for a complete trajectory $\{{\hat w_n^{(t)},s_n^{(t + 1)}, \hat w_n^{(t+1)},\cdots}\}$ under decoder policy ${\pi _{{\phi _n}}}$ starting from $s_n^{(0)}$, which is given by 
\begin{equation}
    \label{eq: objective function decoder}		                                     
   J({\phi _n}) = {V_{{\pi _{{\phi _n}}}}}(s_n^{(0)}) 
  =  {\mathbb{E}_{_{{\pi _{{\phi _n}}}}}}\left[ {G_n^0|s_n^0} \right]
   = {\mathbb{E}_{_{{\pi _{{\phi _n}}}}}}\left[ {\sum\limits_{t = 1}^{{{\hat T}_n}} {r_n^{(t)}} } \right] 	
\end{equation}

\noindent While for encoder optimization, \tx~ aims to find an optimal encoding to maximize the semantic accuracy for all decoders. Under the encoder policy ${\pi _{{\theta}}}$, we regard the maximum average \textit{return} of all decoders as the objective function, that is 	                        \begin{equation}
               J({\theta}) = {1 \over N}\sum\limits_{n = 1}^N {{V_{{\pi _{\theta  }} }} (s_n^{(0)})} 
               ={1 \over N}\sum\limits_{n = 1}^N { \mathop {\mathbb{E}_{{\pi _{{\theta}}}}} \left[ {\sum\limits_{t = 1}^{{\hat T_n}} {r_n^{(t)}}} \right] } 
 \label{eq: objective function encoder}
\end{equation}

\begin{figure*}[hbt]
   \centering
   \setlength{\abovecaptionskip}{-0.2cm}
		\includegraphics[width=0.88\linewidth]{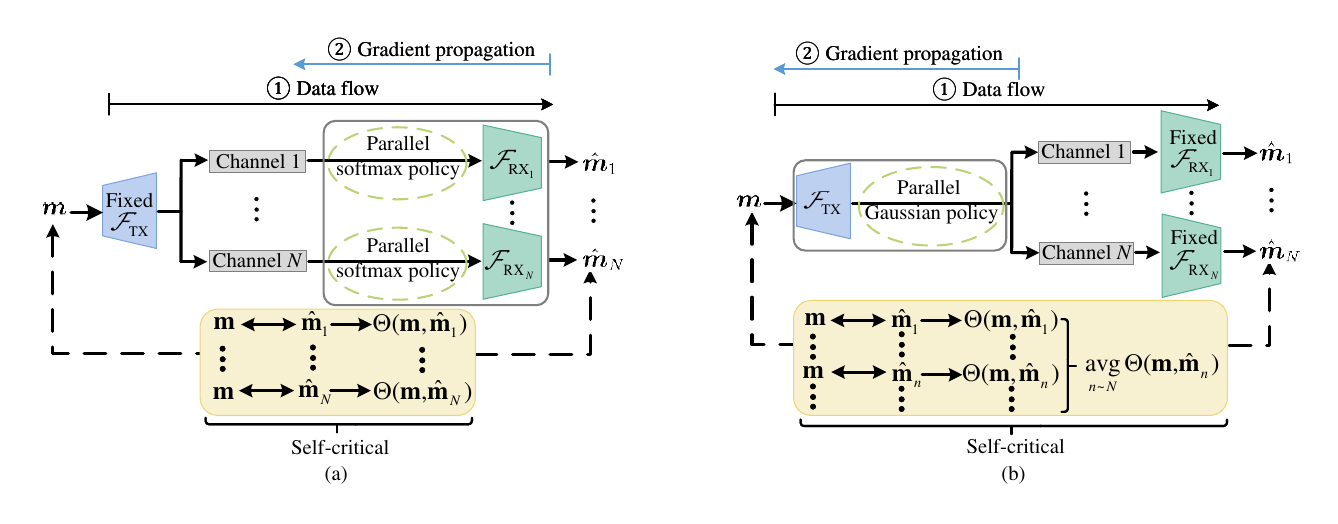}\\
		\caption{Optimization process under the alternate learning mechanism. (a) The training process of \rxn. (b) The training process of \tx. (The black solid line is the direction of data flow, while the blue solid line denotes the direction of gradient propagation.)}
		\label{Training process} 
\end{figure*}   
 \subsection{Self-Critical Optimization under Alternate Learning Mechanism}   
\label{sec: Self-critical Optimization under Alternate Learning Mechanism}

 In order to find a proper solution to maximize the objectives functions (\ref{eq: objective function decoder}) and (\ref{eq: objective function encoder}), one most straightforward solution is to simulate the environment to obtain Monte-Carlo trajectory $\{ (s_n^{(t)},\hat w_n^{(t)},r_n^{(t)})\} _{t = 1}^{\hat T_n}$ for each time step $t$ \cite{gao2019self}. Though Monte-Carlo rollout provides an unbiased estimation for expected \textit{return}, large state-action space unavoidably leads to high variance \cite{dubey2016variance}. Another alternative solution is the ``actor-critic'' method and its variants\cite{dogru2021actor}, in which the baseline is established by constructing another \emph{value network} to criticize the policy \cite{lu2021reinforcement}. However, this way will introduce extra parameters and estimation bias. Meanwhile, due to the extensive number of tokens included in the dictionary $W_m$, the size of state-action space exceeds ${10^{4}}$. In this sense, merely the Monte-Carlo trajectory or actor-critic methods are impractical to implement for sentence generation in the SemCom system.  

 Based on the aforementioned discussions, to handle this large-scale sentence generation, we develop the SemanticBC-SCAL algorithm to combine a simper and practical RL algorithm, \ie, self-critical learning \cite{luo2020better}, with the abovementioned alternate learning mechanism in Section \ref{sec: Problem Formulation}. Specially, SCAL samples a group of parallel Monte Carlo trajectories and utilizes the mean reward from the rest of $K-1$ parallel samples, that is,  ${\boldsymbol{\hat m}}_n=\{ {{{\boldsymbol{\hat m}}}_{n,1}},\cdots,{{{\boldsymbol{\hat m}}}_{n, K-1}}\}$ (where $K$ is the number of samples), to act as the baseline function. To generate a complete trajectory by introducing a certain exploration, the semantic encoder/decoders could employ the policy  ${\pi _\theta }$ and ${\pi _{{\phi _n}}}$ respectively to guide the system in accommodating multi-view semantic representations. Since the baseline comes from itself, \ie, its other $K-1$ decoding results, this kind of supervision can be regarded as ``self-critical'' manner \cite{rennie2017self,gao2019self, luo2020better}, and the empirical results show that larger $K$ will have a larger absolute value of the mean and introduce lower variance \cite{gao2019self}. 
 Taking an example of decoder optimization in Fig. \ref{self-critic}, the $\rm{TX}$ sends the message (``I  have just brought a banana'') to the broadcast channel, and the $\rm{RX}_n$ first samples $K=5$ parallel samples before channel decoding, in which the decoded parallel sentences could be different. Then, \rxn ~can be viewed as an actor responsible for taking actions by Monte Carlo sampling to generate $5$ parallel decoding results. As such, the advantage of one decoding result will be evaluated by the other 4 parallel samples. 
 
  Furthermore, combining the self-critical learning and alternate learning mechanism, a learning-efficient method is devised to asynchronously update the one \tx ~and multiple \rxs ~to adapt to its own channel environment. 
  To be specific, in the decoder training stage, we freeze the encoder parameters $\theta $ and locally update the decoder parameters $\phi_n$. Since the semantic similarity score can only be obtained via the self-critical manner after decoding a complete sentence, the decoder updating occurs only after a complete sentence transmission. The details of gradients and flow in SemanticBC-SCAL are illustrated in Fig. \ref{Training process}(a).  

  Moreover, we provide how to compute the gradient of (\ref{eq: objective function decoder}) by the following theorem.

\begin{theorem}\label{thm:theorem1}
(Self-critical semantic policy gradient for $\rm{RX}_n$) For any semantic decoder $n$ optimized by the alternate learning mechanism with $K$ parallel sampling, suppose that the decoder is updated by the policy  $\pi _{{\phi _n}}^{(t)}$ to generate a complete sequence $\{\hat w_n^{(1)},\hat w_n^{(2)},\cdots,\hat w_n^{({{\hat T_n}})}\}$, the gradient can be approximated by

\begin{equation}
\begin{aligned}
\label{eq: decoder-gradient0}
      {\nabla _{{\phi _n}}}J({\phi _n}) &\approx {1 \over K}\sum\nolimits_{i = 1}^K \left[ {\sum\nolimits_{t = 1}^{{{\hat T}_n}} {{\nabla _{{\phi _n}}}\log {\pi _{{\phi _{n,i}}}}(\hat w_n^{(t)}|s_n^{(t)})} } \right.\\      
		& \quad \quad \left.{ \cdot \left( {{\Theta _{n,i}} - \mathop {{\rm{avg}}}\limits_{k\sim K,k \ne i} {\Theta _{n,k}}} \right)} \right]\\
	\end{aligned} 
\end{equation}
where the subscript $i$ in ${\pi _{\phi _n, i} }$ denotes the i-th parallel sampling policy of sample $i$, while ${\rm{avg}}( \cdot )$ gets the mean value. ${\Theta _i}$ is the return of current sample $i$, and ${\Theta _k}$ represents any one of parallel samples. 
\end{theorem}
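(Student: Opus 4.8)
The plan is to obtain \eqref{eq: decoder-gradient0} from the definition of $J(\phi_n)$ in \eqref{eq: objective function decoder} via the policy gradient (REINFORCE) theorem, then specialize to the sparse-reward structure of \eqref{eq:reward}, and finally insert the self-critical baseline. First I would write $J(\phi_n)$ as an expectation over complete decoding trajectories $\tau_n=\{\hat w_n^{(1)},s_n^{(1)},\hat w_n^{(2)},\cdots\}$ generated by $\pi_{\phi_n}$, whose likelihood factorizes as $\pi_{\phi_n}(\tau_n)=\prod_{t}\pi_{\phi_n}(\hat w_n^{(t)}|s_n^{(t)})\,P(s_n^{(t+1)}|s_n^{(t)},\hat w_n^{(t)})$. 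Since the state transition is deterministic, the transition factors carry no dependence on $\phi_n$, and the log-derivative identity $\nabla_{\phi_n}\pi_{\phi_n}(\tau_n)=\pi_{\phi_n}(\tau_n)\nabla_{\phi_n}\log\pi_{\phi_n}(\tau_n)$ yields
\begin{equation*}
\nabla_{\phi_n}J(\phi_n)=\mathbb{E}_{\tau_n\sim\pi_{\phi_n}}\!\left[\sum_{t=1}^{\hat T_n}\nabla_{\phi_n}\log\pi_{\phi_n}(\hat w_n^{(t)}|s_n^{(t)})\,G_n^{(t)}\right].
\end{equation*}

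Second I would collapse the return. By the sparse reward \eqref{eq:reward}, every intermediate reward vanishes and only $r_n^{(\hat T_n)}=\Theta(\boldsymbol{m},\hat{\boldsymbol{m}}_n)=:\Theta_n$ survives, so by \eqref{eq:accumulative return} the return at every step equals the single sequence-level score, $G_n^{(t)}=\Theta_n$. Each score-function term is therefore weighted by the same non-differentiable semantic similarity $\Theta_n$, which is precisely the feature that makes this RL formulation compliant with an arbitrary metric $\Theta(\cdot)$.

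Third I would introduce a baseline $b$ to curb the high variance of the Monte Carlo estimate flagged in Section~\ref{sec: Self-critical Optimization under Alternate Learning Mechanism}. The key identity is that, for any $b$ independent of the sampled trajectory, $\mathbb{E}_{\tau_n}[\nabla_{\phi_n}\log\pi_{\phi_n}(\tau_n)\,b]=0$, which follows from $\sum_{\tau_n}\pi_{\phi_n}(\tau_n)=1$ implying $\mathbb{E}_{\tau_n}[\nabla_{\phi_n}\log\pi_{\phi_n}(\tau_n)]=0$; hence subtracting $b$ leaves the gradient unbiased. I would then specialize to the self-critical choice: draw $K$ parallel trajectories $\{\hat{\boldsymbol{m}}_{n,i}\}_{i=1}^{K}$ from $\pi_{\phi_n}$ and, for trajectory $i$, take the leave-one-out mean $\mathrm{avg}_{k\sim K,k\neq i}\Theta_{n,k}$ of the remaining $K-1$ returns as its baseline. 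Replacing the expectation by the empirical average over the $K$ samples then reproduces \eqref{eq: decoder-gradient0} verbatim.

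The step I expect to be the main obstacle is justifying that this self-critical baseline preserves unbiasedness. Unlike a fixed constant, the leave-one-out average is itself random, so the argument hinges on the parallel samples being drawn independently from the same policy, which makes the baseline used for trajectory $i$ statistically independent of trajectory $i$. This independence is exactly what lets the zero-expectation property of the score function apply summand by summand, ensuring $\mathbb{E}[\nabla_{\phi_n}\log\pi_{\phi_{n,i}}\cdot\mathrm{avg}_{k\neq i}\Theta_{n,k}]=0$. Once this is settled, the remaining work---factorizing the trajectory likelihood, collapsing the sparse return, and passing from the expectation to the $K$-sample average---is routine bookkeeping.
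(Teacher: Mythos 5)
Your proposal is correct and follows essentially the same route as the paper's proof: both apply the log-derivative (REINFORCE) identity to the trajectory likelihood, use the deterministic state transitions to drop the transition factors from the $\phi_n$-gradient, collapse the sparse return to the single sequence-level score $\Theta(\boldsymbol{m},\hat{\boldsymbol{m}}_n)$, and then insert the leave-one-out self-critical baseline over the $K$ parallel samples. If anything, you are more careful than the paper, which simply asserts the final baseline step (``the self-critical baseline function is implemented by utilizing the mean reward from the rest of $K-1$ parallel samples''), whereas you supply the standard zero-expectation argument showing that the independence of the parallel trajectories keeps the leave-one-out baseline unbiased.
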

\begin{proof}

 During the decoder optimization process, a complete trajectory $ {{{\boldsymbol{\hat m}}}_n}=\{\hat w_n^{(1)},\hat w_n^{(2)},\cdots,\hat w_n^{({{\hat T_n}})}\}$ has the probability of $P({{{\boldsymbol{\hat m}}}_n}|{\phi _n})$, and the gradient of $\rm{RX}_n$ is 
\begin{align}
    \label{eq: decoder-gradient1}
		 & J({\phi _n}) \nonumber\\
   = &{\nabla _{{\phi _n}}}\left[ {\sum\nolimits_{{{{\boldsymbol{\hat m}}}_n}} {P({{{\boldsymbol{\hat m}}}_n}|{\phi _n})\sum\nolimits_{t = 1}^{{{\hat T}_n}} {r_n^{(t)}} } } \right]\nonumber\\
               =& \sum\nolimits_{{{{\boldsymbol{\hat m}}}_n}} {\left[ {{\nabla _{{\phi _n}}}P({{{\boldsymbol{\hat m}}}_n}|{\phi _n}) \cdot \sum\nolimits_{t = 1}^{{{\hat T}_n}} {r_n^{(t)}} } \right]}  \\
                \stackrel{(a)}{=} & \sum\nolimits_{{{{\boldsymbol{\hat m}}}_n}} {\left[ {P({{{\boldsymbol{\hat m}}}_n}|{\phi _n}){\nabla _{{\phi _n}}}\log \left( {P({{{\boldsymbol{\hat m}}}_n}|{\phi _n})} \right) \cdot \sum\nolimits_{t = 1}^{{{\hat T}_n}} {r_n^{(t)}} } \right]}\nonumber  \\
                 = & \mathbb{E}_{{{{\boldsymbol{\hat m}}}_n}} \left[ {{\nabla _{{\phi _n}}}\log \left( {P({{{\boldsymbol{\hat m}}}_n}|{\phi _n})} \right) \cdot \sum\nolimits_{t = 1}^{{{\hat T}_n}} {r_n^{(t)}} } \right] \nonumber
\end{align}
where the equality (a) is derived by using the log-trick $\nabla \log x = \nabla x/x$. Since we sample one Monte Carlo trajectory to estimate the ${{{{\boldsymbol{\hat m}}}_n}}$, then one obtains 
\begin{equation}
    \label{eq: decoder-gradient2}
        {\nabla _{{\phi _n}}}J({\phi _n}) \approx {\nabla _{{\phi _n}}}\log \left( {P({{{\boldsymbol{\hat m}}}_n}|{\phi _n})} \right) \cdot \sum\nolimits_{t = 1}^{{{\hat T}_n}} {r_n^{(t)}} 
\end{equation}

As the sentence is generated step by step through ${{{\hat T}_n}}+1$ actions, ${P({{{\boldsymbol{\hat m}}}_n}|{\phi _n})}$ is further written as $P({{\boldsymbol{\hat m}}_n}|{\phi _n}) = {p(s_n^0)\prod\nolimits_{t = 1}^{{{\hat T}_n}} {\left[ {{\pi _{{\phi _n}}}(\hat w_n^{(t)}|s_n^{(t)}){P_{{\pi _{{\phi _n}}}}(s_n^{(t)}|s_n^{(t-1 )},\hat w_n^{(t-1)}})} \right]} } $. Meanwhile, the reward function is sparse, \ie, $\sum\nolimits_{t = 1}^{{{\hat T}_n}} {r_n^{(t)}} = {\Theta ({\boldsymbol{m}},{{{\boldsymbol{\hat m}}}_n})}$, and the state transition probability ${P_{\pi _{{\phi _n}}}(s_n^{(t)}|s_n^{(t-1 )},\hat w_n^{(t-1)}})$ is deterministic between two adjacent states. Therefore, one has 
\begin{equation}
    \label{eq: decoder-gradient3}
       {\nabla _{{\phi _n}}}J(\theta ,{\phi _n})  \approx \sum\nolimits_{t = 1}^{{{\hat T}_n}} {{\nabla _{{\phi _n}}}\log {\pi _{{\phi _n}}}(\hat w_n^{(t)}|s_n^{(t)})}  \Theta ({\boldsymbol{m}},{{{\boldsymbol{\hat m}}}_n}) 
\end{equation}
  Finally, the self-critical baseline function is implemented by utilizing the mean reward from the rest of $K - 1$ parallel samples, thus, the proof ends with \eqref{eq: decoder-gradient0}. 
\end{proof} 

\noindent \textbf{Remark:} It is worth mentioning that the term $\bigg( {{\Theta _{n,i}} - \mathop {\rm{avg}}\limits_{k\sim K,k \ne i} {\Theta _{n,k}}} \bigg)$ is regarded as baseline or advantage of policy ${\pi _{\phi_n, i} }$, which is a success of SCAL because the critical samples come from its own model and also obtain a  gradient variance reduction \cite{luo2020better}. Furthermore, in semantics representation, the one with higher semantic similarity gives a larger positive reward, while those with lower advantage will be punished.  

 As for the encoder, recalling that as in Fig. \ref{Training process}(b), the encoder output is converted to $K$ parallel samples by Gaussian policy, which are then sent to broadcast channels. Therefore, the objective of the encoder is to maximize the average reward of all decoders, the average decoded semantic similarity of different decoders is conversely used to optimize the $\FTX$. Analogously, we give the optimization direction of (\ref{eq: objective function encoder}) by the following theorem.

\begin{theorem}\label{thm:theorem2}
(Self-critical semantic policy gradient for $\rm{TX}$) For semantic encoder optimized by the alternate learning mechanism, suppose the encoder is updated by 
the parallel Gaussian strategy $\pi _\theta$, the gradient can be approximated by the Monte-Carlo sampling as
\begin{align}
\nabla_\theta J(\theta) 
&\approx\frac1N\cdot\frac1K\sum\nolimits_{n=1}^N\sum\nolimits_{i=1}^K\left[\left[\widetilde{\mathcal{F}_{\rm{TX}}}(\mathbf{m})-\mathcal{F}_{\rm{TX}}(\mathbf{m})\right]^T\right. \label{eq: encoder-gradient1}  \\
&\quad \left.\cdot\boldsymbol{\Sigma}^{-1}\left[\nabla_{\theta}\mathcal{F}_{\rm{TX}}(\mathbf{m})\right]\cdot\left(\Theta_{n,i}-\underset{k\thicksim K,k\neq i}{\operatorname* 
{avg}}\Theta_{n,k}\right)\right]  \nonumber 
\end{align}
where $\widetilde{\mathcal{F}_{\rm{TX}}}$ is the Gaussian sampling for $\mathbf{m}$, $T$ is the length of source message $\boldsymbol{m}$.   
\end{theorem}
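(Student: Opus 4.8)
The plan is to mirror the derivation of Theorem \ref{thm:theorem1}, adapting the log-derivative (REINFORCE) trick to the continuous Gaussian encoder policy rather than the discrete multinomial decoder policy. Starting from the encoder objective \eqref{eq: objective function encoder}, I would write $J(\theta) = \frac{1}{N}\sum_{n=1}^N \mathbb{E}_{\pi_\theta}\left[\sum_{t=1}^{\hat T_n} r_n^{(t)}\right]$ and push the gradient inside the expectation via the identity $\nabla_\theta P = P\,\nabla_\theta \log P$, exactly as in the chain leading to \eqref{eq: decoder-gradient1}. This reduces the whole problem to evaluating the score function $\nabla_\theta \log \pi_\theta$ for the Gaussian encoding policy defined in \eqref{eq:encoder-strategy}, after which the outer $\frac{1}{N}\sum_{n=1}^N$ simply carries through by linearity.

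The key step is computing this Gaussian score. Since $\pi_\theta$ samples from $\mathcal{N}\!\left(\boldsymbol{\mu},\Sigma\right)$ with $\boldsymbol{\mu} = \FTX(\boldsymbol{m})$ and fixed $\Sigma = (\sigma\boldsymbol{I})^2$, the log-density is $\log \pi_\theta = -\frac{1}{2}(\boldsymbol{x}-\boldsymbol{\mu})^T \Sigma^{-1}(\boldsymbol{x}-\boldsymbol{\mu}) + \text{const}$, where the normalizing constant does not depend on $\theta$. Only the mean carries the $\theta$-dependence, so the chain rule gives $\nabla_\theta \log \pi_\theta = (\boldsymbol{x}-\boldsymbol{\mu})^T \Sigma^{-1}\,\nabla_\theta \boldsymbol{\mu}$. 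Substituting the realized Gaussian sample $\boldsymbol{x} = \widetilde{\FTX}(\boldsymbol{m})$ and $\boldsymbol{\mu} = \FTX(\boldsymbol{m})$ yields precisely the factor $\left[\widetilde{\FTX}(\boldsymbol{m}) - \FTX(\boldsymbol{m})\right]^T \Sigma^{-1}\left[\nabla_\theta \FTX(\boldsymbol{m})\right]$ that appears in \eqref{eq: encoder-gradient1}.

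It remains to handle the return and the baseline. Because the reward is sparse, nonzero only at the terminal step as in \eqref{eq:reward}, the inner sum $\sum_{t=1}^{\hat T_n} r_n^{(t)}$ collapses to the single terminal similarity $\Theta_{n}$; and because the encoding is a single continuous action per sample rather than a token-by-token sequence, the score term has no summation over time steps, unlike the decoder case. I would then introduce the self-critical baseline by replacing $\Theta_{n,i}$ with the advantage $\Theta_{n,i} - \operatorname{avg}_{k\sim K, k\neq i}\Theta_{n,k}$, which is admissible since a baseline independent of the current sample $i$ leaves the expected gradient unbiased (the same reasoning invoked in the Remark after Theorem \ref{thm:theorem1}), and finally approximate the two expectations by Monte-Carlo averaging over the $K$ parallel Gaussian samples and over the $N$ decoders, producing the $\frac{1}{N}\cdot\frac{1}{K}\sum_{n=1}^N\sum_{i=1}^K$ prefactor.

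The main obstacle I expect is the vectorization and shape bookkeeping in the Gaussian score term: one must keep $\FTX(\boldsymbol{m})$, the sample $\widetilde{\FTX}(\boldsymbol{m})$, and the Jacobian $\nabla_\theta \FTX(\boldsymbol{m})$ dimensionally consistent so that the transpose and the placement of $\Sigma^{-1}$ in \eqref{eq: encoder-gradient1} are correct, and to use the diagonal structure $\Sigma = (\sigma\boldsymbol{I})^2$ only if a simplified scalar form is ultimately desired. The probabilistic manipulations themselves are routine once the score function is pinned down.
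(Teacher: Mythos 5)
Your proposal is correct and follows essentially the same route as the paper's own (sketched) proof: both reduce the problem to the Gaussian score function $\nabla_\theta \log \pi_\theta = [\boldsymbol{x}-\boldsymbol{\mu}]^T\boldsymbol{\Sigma}^{-1}[\nabla_\theta\boldsymbol{\mu}]$ via the log-derivative trick, substitute $\boldsymbol{x}=\widetilde{\FTX}(\boldsymbol{m})$ and $\boldsymbol{\mu}=\FTX(\boldsymbol{m})$, and then reuse the Theorem~\ref{thm:theorem1} argument with the self-critical baseline and Monte-Carlo averaging over the $K$ samples and $N$ decoders. Your version is in fact more complete than the paper's sketch, since you explicitly justify the collapse of the sparse return and the unbiasedness of the sample-independent baseline, which the paper leaves implicit.
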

\begin{proof} 
The proof is similar to that for Theorem \ref{thm:theorem1}. Thus, we only provide a sketch here. 

Basically, for a Gaussian distribution
\begin{align}
     &\pi_{\theta}(\boldsymbol{x};\boldsymbol{\mu}, \boldsymbol{\Sigma}) \\
     &= \frac{1}{(2\pi)^{D/2} \sqrt{\det{\boldsymbol{\Sigma}}}} \exp \left(-\frac{1}{2}(\boldsymbol{x}-\boldsymbol{\mu})^T \boldsymbol{\Sigma}^{-1} (\boldsymbol{x}-\boldsymbol{\mu})\right) \nonumber
\end{align}
we have
\begin{equation}
\label{eq: Gaussian gradient}  
	\nabla_{\theta} \log \pi_{\theta}(\boldsymbol{x};\boldsymbol{\mu}, \boldsymbol{\Sigma}) = \left[\boldsymbol{x}-\boldsymbol{\mu}\right]^T \boldsymbol{\Sigma}^{-1}   \left[\nabla_{\phi}\boldsymbol{\mu}\right]
\end{equation}
Taking $\boldsymbol{x}=\widetilde{\FTX}$ and $\boldsymbol{\mu} = \FTX({\boldsymbol{m}})$, as well as following the proof for Theorem \ref{thm:theorem1}, we have the theorem.
\end{proof}

 \begin{algorithm}[t]
   	\renewcommand{\algorithmiccomment}{\textbf{// }}
	\renewcommand{\algorithmicrequire}{\textbf{Input:}}
	\renewcommand{\algorithmicensure}{\textbf{Output:}}
   \caption{The training process of SemanticBC-SCAL.}
   \label{al:semanticbc_scal}
		\begin{algorithmic}[1]
                 \REQUIRE Batch size, learning rate $lr$, parallel samples $K$, input sequence ${\boldsymbol{m}}$, local iterations $\kappa$, no. of pre-training end epochs $E_p$, no. of end epochs $E_e$, no. of end batches $E_b$, number of \rxs ~$N$. 
			\ENSURE Encoder parameter $\theta$, decoder parameter $\phi_n$. \\
			\COMMENT{\textbf{Pre-training stage}} 
			\FOR{epoch=$1: E_p$} 
			\STATE Sample a batch of data to train parameters of encoder $\theta$ and decoder $\phi$ optimized by CE loss\cite{zhou2021semantic}.               
			\STATE Update $\theta$ and  $\phi$ to obtain the pre-trained parameters ${\theta}_{\text{pre}}$ and ${\phi}_{\text{pre}}$. \\  
			\ENDFOR\\
			\COMMENT{\textbf{RL-based alternate learning}}  
                \STATE $\theta  \leftarrow {\theta _{{\text{pre}}}}$, ${\phi _n} \leftarrow {\phi _{{\text{pre}}}}$
		\FOR{epoch=$E_p + 1: E_e$} 
                \FOR{$i=1: E_b$}
                \STATE Sample a batch of data
                \STATE 	\COMMENT{\textbf{Training Decoder}}
                \WHILE{ $(i\mod ~\kappa) ~ \ne  0$} 
               \FOR{$n=1: N$} 				
                \STATE Freeze $\theta$, ${\rm{R}}{{\rm{X}}_n}$ samples $K$ parallel trajectories, and updates ${\phi _n} \leftarrow {\phi _n} + lr \cdot {\nabla _{{\phi _n}}}J({\phi _n})$ as in (\ref{eq: decoder-gradient0}). 
     			      \ENDFOR 
                       \ENDWHILE
				   \STATE \COMMENT{\textbf{Training Encoder}}
				   \STATE For each ${\rm{R}}{{\rm{X}}_n}$, freeze $\phi_n$ and sample $K$ parallel trajectories.        
       \STATE Update $\theta  \leftarrow \theta  + lr \cdot {\nabla _{\theta }}J(\theta )$ by (\ref{eq: encoder-gradient1}).
           \ENDFOR
           \ENDFOR
                \STATE 	\COMMENT{\textbf{Finish training}} 
			\STATE Return $\theta$ and $\phi_n$. 		 	
		\end{algorithmic}
	\end{algorithm}	
 
\noindent \textbf{Remark:} On the basis of this stable and reasonable learning policy, our SCAL-based semantic BC system can adaptively encode/decode by introducing as few parameters and costs as possible. To obtain a stable and expected reward (semantic similarity score) in the training process, we adopt the aforementioned alternate learning mechanism at both $\rm{TX}$ and $\rm{RX}_n$ to implement a Monte-Carlo simulation. Specifically, in an \textit{update cycle}, the \rxs~ update $\kappa$ iterations and then \tx ~ updates once. Therefore, this process is implemented iteratively until both the encoder and decoders converge.  
With this updating mechanism through self-critical training, we provide a stable and precise gradient surrogate, in which we neither need an additional module to estimate the gradient, nor suffer from a poor estimation of expected reward. 

Finally, the detailed procedures for training SemanticBC-SCAL are summarized in \textbf{Algorithm \ref{al:semanticbc_scal}}.

\subsection{Convergence Analysis} 
\label{sec: Convergence Analysis} 
\begin{theorem}\label{thm:theorem3} 
For an independent decoder agent \rxn ~parameterized by $\phi_n$ for its policy updates, state value function converges to a fixed point in $V_{{\pi _{{\phi _n}}}}$.
\end{theorem}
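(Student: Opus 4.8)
The plan is to cast the decoder's token-generation process as a finite-horizon episodic MDP with a single absorbing terminal state (the state entered after the ``<EOS>'' token) and to show that the policy-evaluation Bellman operator associated with the fixed policy $\pi_{\phi_n}$ is a contraction, so that the Banach fixed-point theorem yields a unique fixed point $V_{\pi_{\phi_n}}$ to which the value iterates converge from any initialization.

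First I would define the Bellman expectation operator $\mathcal{T}^{\pi_{\phi_n}}$ on the space of bounded value functions over $\mathcal{S}$ by $(\mathcal{T}^{\pi_{\phi_n}} V)(s_n^{(t)}) = \sum_{\hat w} \pi_{\phi_n}(\hat w \mid s_n^{(t)}) [ r_n^{(t)} + \gamma V(s_n^{(t+1)}) ]$, and recall from the MDP construction that (i) the per-step transition $P(s_n^{(t+1)} \mid s_n^{(t)}, \hat w_n^{(t)}) = 1$ is deterministic, (ii) the reward is sparse and bounded, since the semantic metric $\Theta$ takes values in a bounded range (e.g.\ $[0,1]$ for BLEU/BERT-SIM/WAR), and (iii) every trajectory terminates after at most $\hat T_n$ steps. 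These three facts guarantee that $\mathcal{T}^{\pi_{\phi_n}}$ maps the space of bounded value functions into itself, so that it is well defined on a complete metric space under the sup-norm.

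Next I would establish the contraction. For $\gamma \in (0,1)$ the standard estimate $\|\mathcal{T}^{\pi_{\phi_n}} V_1 - \mathcal{T}^{\pi_{\phi_n}} V_2\|_\infty \le \gamma \|V_1 - V_2\|_\infty$ holds immediately, making $\mathcal{T}^{\pi_{\phi_n}}$ a $\gamma$-contraction. The delicate case is the assumed $\gamma = 1$, where this bound degenerates; here I would exploit the episodic structure directly. Because transitions are deterministic and each episode reaches the absorbing terminal state within the bounded horizon $\hat T_n$, the policy $\pi_{\phi_n}$ is proper, and the $\hat T_n$-fold composition $(\mathcal{T}^{\pi_{\phi_n}})^{\hat T_n}$ collapses any two value functions onto the same terminal boundary value (fixed at $0$ beyond ``<EOS>''). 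Equivalently, backward induction from the terminal state computes $V_{\pi_{\phi_n}}$ exactly in $\hat T_n$ sweeps, so the iteration converges in finitely many steps; this can also be phrased as a weighted-sup-norm contraction for proper policies.

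Having a (eventual) contraction on a complete metric space, the Banach fixed-point theorem then delivers a unique fixed point $V_{\pi_{\phi_n}}$ satisfying $V_{\pi_{\phi_n}} = \mathcal{T}^{\pi_{\phi_n}} V_{\pi_{\phi_n}}$, and the iterates $V^{(k)} = (\mathcal{T}^{\pi_{\phi_n}})^k V^{(0)}$ converge to it, which is the claim. The main obstacle I anticipate is precisely the $\gamma = 1$ degeneracy: the clean sup-norm contraction is lost, so the argument must lean on the finite-horizon/proper-policy property, and the crux becomes justifying that every trajectory under $\pi_{\phi_n}$ almost surely reaches ``<EOS>'' within boundedly many steps (equivalently, that the maximum decoded length $\hat T_n$ is finite), which underwrites both the boundedness of $\mathcal{T}^{\pi_{\phi_n}}$ and its eventual contractivity.
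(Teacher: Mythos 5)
Your proposal follows the same skeleton as the paper's proof --- view policy evaluation for the fixed policy $\pi_{\phi_n}$ as a Bellman operator on a complete metric space under the sup-norm and invoke the Banach fixed-point theorem --- but it is more careful at exactly the point where the paper is loose, and the difference is worth spelling out. The paper writes $F(\boldsymbol{x}) = R_{\pi_{\phi_n}} + \lambda P_{\pi_{\phi_n}}\boldsymbol{x}$, derives $d\bigl(F(V^{(i)}),F(V^{(j)})\bigr) \le \lambda\, d\bigl(V^{(i)},V^{(j)}\bigr)$, and concludes via the contraction mapping theorem; this is valid only for $\lambda < 1$, yet the paper's own MDP formulation in Section \ref{sec: MDP} explicitly sets the discount factor $\gamma = 1$, under which the one-step operator is merely nonexpansive and Banach's theorem does not apply. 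You identify this degeneracy explicitly and repair it with the standard proper-policy argument: since transitions are deterministic, the reward is bounded, and every episode is absorbed after at most $\hat T_n$ steps (termination at ``<EOS>''), the $\hat T_n$-fold composition of the operator is an eventual contraction --- equivalently, backward induction from the terminal state computes $V_{\pi_{\phi_n}}$ exactly in finitely many sweeps. So your route buys rigor precisely in the regime the paper actually works in ($\gamma = 1$), at the cost of having to justify finiteness of the decoding horizon, which in this system is guaranteed by the bounded maximum decoded length $\hat T_n$; the paper's shorter argument is clean but, as written, only covers $\lambda \in (0,1)$ and thus quietly contradicts its own simplifying assumption. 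Your proof subsumes the paper's when $\gamma < 1$ and correctly extends it to the undiscounted episodic case.
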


\begin{proof}
According to the Bellman equation formulated in \cite{sutton2018reinforcement}, one has 
\begin{equation}
\label{eq: Bellman1}  
V_{{\pi _{{\phi _n}}}} = R_{{\pi _{{\phi _n}}}} + \lambda P_{{\pi _{{\phi _n}}}}V_{{\pi _{{\phi _n}}}}
\end{equation} 

We consider a complete metric space $\left\langle V_{{\pi _{{\phi _n}}}},d \right\rangle$, a mapping $F$: ${V_{{\pi _{{\phi _n}}}}} \to {V_{{\pi _{{\phi _n}}}}}$, where $d$ is a metric of nonempty set $V_{{\pi _{{\phi _n}}}}$ and satisfies $d: {V_{{\pi _{{\phi _n}}}}} \times {V_{{\pi _{{\phi _n}}}}} \to \mathbb{R}$. Hence, for any $\boldsymbol{x}$ in $V_{{\pi _{{\phi _n}}}}$,  the \eqref{eq: Bellman1} can be represented as $F({\boldsymbol{x}}) = R_{{\pi _{{\phi _n}}}} + \lambda P_{{\pi _{{\phi _n}}}}{\boldsymbol{x}}$. Meanwhile, for any $V_{{\pi _{{\phi _n}}}}^{(i)}$ and $V_{{\pi _{{\phi _n}}}}^{(j)}$ in $V_{{\pi _{{\phi _n}}}}$, $d$ is typically represented by the norm metric and here we choose infinity norm, \ie, $d(V_{{\pi _{{\phi _n}}}}^{(i)}, V_{{\pi _{{\phi _n}}}}^{(j)}) = \Vert V_{{\pi _{{\phi _n}}}}^{(i)}, V_{{\pi _{{\phi _n}}}}^{(j)}\Vert {_\infty }$. Then one can observe that   
\begin{align}
\label{eq: covergence-decoder}  
   &d( {F({V_{{\pi _{{\phi _n}}}}^{(i)}}),F({V_{{\pi _{{\phi _n}}}}^{(j)}}) } ) \nonumber \\
   = &\Vert(R_{{\pi _{{\phi _n}}}} + \lambda P_{{\pi _{{\phi _n}}}}{V_{{\pi _{{\phi _n}}}}^{(i)}}) - (R_{{\pi _{{\phi _n}}}} + \lambda P_{{\pi _{{\phi _n}}}}{V_{{\pi _{{\phi _n}}}}^{(j)}})\Vert{_\infty }  \nonumber \\ 
   =& \Vert \lambda P_{{\pi _{{\phi _n}}}}({V_{{\pi _{{\phi _n}}}}^{(i)}} - {V_{{\pi _{{\phi _n}}}}^{(j)}})\Vert _\infty \\
   \le & \Vert \lambda d({V_{{\pi _{{\phi _n}}}}^{(i)}} - {V_{{\pi _{{\phi _n}}}}^{(j)}})\Vert _\infty  \nonumber  \\
   = & \lambda d({V_{{\pi _{{\phi _n}}}}^{(i)}} - {V_{{\pi _{{\phi _n}}}}^{(j)}})  \nonumber 
\end{align}

Then according to the contracting mapping theorem (also known as the Banach Fixed-Point Theorem) \cite{agarwal2001fixed}, we can prove $F({\boldsymbol{x}}) = R_{{\pi _{{\phi _n}}}} + \lambda P_{{\pi _{{\phi _n}}}}{\boldsymbol{x}}$ has a fixed point in $V_{{\pi _{{\phi _n}}}}$, which satisfies $V_{{\pi _{{\phi _n}}}} = R_{{\pi _{{\phi _n}}}} + \lambda P_{{\pi _{{\phi _n}}}}V_{{\pi _{{\phi _n}}}}$. As such,  we conclude that, by continuously iterating from any state, the value function converges to a fixed point in $V_{{\pi _{{\phi _n}}}}$. Since our objective function (\ref{eq: objective function decoder}) aims to maximize the state value function from $s_n^{(0)}$, which satisfies the above conclusion.  
\end{proof} 

\begin{figure*}[ht] 
\begin{align}
   d( {F(V_{{\pi _\theta }}^{(m)},V_{{\pi _{{\phi _n}}}}^{(i)}),F(V_{{\pi _\theta }}^{(k)},V_{{\pi _{{\phi _n}}}}^{(j)})} )  
   = & \Vert F(V_{{\pi _\theta }}^{(m)},V_{{\pi _{{\phi _n}}}}^{(i)}) - F(V_{{\pi _\theta }}^{(m)},V_{{\pi _{{\phi _n}}}}^{(j)}) + F(V_{{\pi _\theta }}^{(m)},V_{{\pi _{{\phi _n}}}}^{(j)}) - F(V_{{\pi _\theta }}^{(k)},V_{{\pi _{{\phi _n}}}}^{(j)})\Vert_\infty  \nonumber \\ 
   \mathop  \le \limits^{(a)} & \Vert F(V_{{\pi _\theta }}^{(m)},V_{{\pi _{{\phi _n}}}}^{(i)}) - F(V_{{\pi _\theta }}^{(m)},V_{{\pi _{{\phi _n}}}}^{(j)})\Vert_\infty + ||F(V_{{\pi _\theta }}^{(m)},V_{{\pi _{{\phi _n}}}}^{(j)}) - F(V_{{\pi _\theta }}^{(k)},V_{{\pi _{{\phi _n}}}}^{(j)})\Vert_\infty \nonumber \\ 
   \mathop  = \limits^{(b)} & ||F(V_{{\pi _{{\phi _n}}}}^{(i)}) - F(V_{{\pi _{{\phi _n}}}}^{(j)})\Vert_\infty +\Vert F(V_{{\pi _\theta }}^{(m)}) - F(V_{{\pi _\theta }}^{(k)})\Vert_\infty  \label{eq: covergence-encoder}\\ 
    = & {\lambda _1}d(V_{{\pi _{{\phi _n}}}}^{(i)},V_{{\pi _{{\phi _n}}}}^{(j)})+ \Vert ({R_{{\pi _\theta }}} + {\lambda _2}{P_{{\pi _\theta }}}V_{\pi _\theta } ^{(m)}) - ({R_{{\pi _\theta }}} + {\lambda _2}{P_{{\pi _\theta }}}V_{\pi _\theta } ^{(k)})\Vert_\infty  \nonumber \\ 
  = & {\lambda _1}d(V_{{\pi _{{\phi _n}}}}^{(i)},V_{{\pi _{{\phi _n}}}}^{(j)}) +{\lambda _2}\Vert{P_{{\pi _\theta }}}(V_{{\pi _\theta }}^{(m)} - V_{{\pi _\theta }}^{(k)})\Vert_\infty 
   \le  {\lambda _1}d(V_{{\pi _{{\phi _n}}}}^{(i)},V_{{\pi _{{\phi _n}}}}^{(j)}) +{\lambda _2}d(V_{{\pi _\theta }}^{(m)},V_{{\pi _\theta }}^{(k)}) \nonumber
\end{align}
\hrulefill
\end{figure*}

\begin{corollary}\label{cor:mycorollary1}
Assuming that in each update cycle, \ie, the encoder updates once while the decoder undergoes $\kappa$ updates, which are assumed to be sufficient for all decoders converging to a fixed point, the encoder could ultimately reach the convergence as well. 
\end{corollary}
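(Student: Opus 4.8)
The plan is to lift the single-agent contraction argument of Theorem \ref{thm:theorem3} to the coupled encoder--decoder system by treating one full \emph{update cycle} as a single joint operator acting on the product of the two value-function spaces. Concretely, I would regard the pair $(V_{\pi_\theta}, V_{\pi_{\phi_n}})$ as a point in a complete metric space equipped with the infinity-norm metric $d$ used in Theorem \ref{thm:theorem3}, and define the cycle map $F$ so that it first applies the $\kappa$ decoder Bellman updates with the encoder frozen and then applies the single encoder Bellman update with the decoders frozen. The target is to show that $F$ is a contraction on this product space and to invoke the Banach Fixed-Point Theorem once more, exactly as in the decoder case.

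First I would establish the joint contraction bound already displayed in \eqref{eq: covergence-encoder}: insert the cross term $F(V_{\pi_\theta}^{(m)},V_{\pi_{\phi_n}}^{(j)})$ and apply the triangle inequality (step (a)), then exploit that, when one coordinate is held fixed, the map reduces to the corresponding single-variable Bellman operator (step (b)), each of which is $\lambda_i$-Lipschitz by precisely the computation in \eqref{eq: covergence-decoder}. This yields
\begin{equation}
d\big(F(V_{\pi_\theta}^{(m)},V_{\pi_{\phi_n}}^{(i)}),F(V_{\pi_\theta}^{(k)},V_{\pi_{\phi_n}}^{(j)})\big) \le \lambda_1 d(V_{\pi_{\phi_n}}^{(i)},V_{\pi_{\phi_n}}^{(j)}) + \lambda_2 d(V_{\pi_\theta}^{(m)},V_{\pi_\theta}^{(k)}).
\end{equation}
Next I would invoke the corollary's standing assumption that $\kappa$ local iterations suffice to drive every decoder to the fixed point guaranteed by Theorem \ref{thm:theorem3}. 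At the end of each cycle the decoder coordinate has therefore already converged, so the term $\lambda_1 d(V_{\pi_{\phi_n}}^{(i)},V_{\pi_{\phi_n}}^{(j)})$ vanishes and the cycle map collapses to a pure contraction in the encoder value function with modulus $\lambda_2<1$. A final application of the Banach Fixed-Point Theorem then shows the encoder value sequence $\{V_{\pi_\theta}^{(m)}\}$ is Cauchy and converges to a unique fixed point; since the encoder objective $J(\theta)$ in \eqref{eq: objective function encoder} is the average of these state values over the $N$ decoders, encoder convergence follows.

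The hard part will be justifying the decoupling that legitimizes step (b), \ie, that the inner $\kappa$-iteration decoder loop can be treated as having reached its fixed point \emph{before} the encoder update acts, rather than the two dynamics interfering throughout the cycle. This rests entirely on the asynchronous timing of the alternate learning mechanism and on $\lambda_2<1$; I would make explicit that the frozen-coordinate structure is precisely what renders the joint map block-triangular, so that the encoder contraction is not contaminated by ongoing decoder motion. I would also note that any residual decoder error after finite $\kappa$ only perturbs the encoder fixed point by a bounded amount proportional to $\lambda_1$, so the conclusion is robust to the idealized ``exact convergence'' assumption.
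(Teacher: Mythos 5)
Your proposal follows essentially the same route as the paper's own proof: you insert the same cross term, apply the triangle inequality and the frozen-coordinate reduction to the single-variable Bellman operators exactly as in steps (a) and (b) of \eqref{eq: covergence-encoder}, and then use the assumption that $\kappa$ decoder iterations reach the Theorem~\ref{thm:theorem3} fixed point so that the $\lambda_1$ term vanishes and Banach's theorem yields encoder convergence. If anything, you make explicit the final contraction step and the robustness to finite-$\kappa$ residual error that the paper leaves implicit in its closing ``Thus, we get the corollary'' and the subsequent remark, so the argument is correct and fully aligned with the paper.
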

 
\begin{proof}
First, under Theorem \ref{thm:theorem3}, it suffices to show that the decoder can be converged under policy ${{\pi _{{\phi _n}}}}$ guided by the Bellman equation in (\ref{eq: Bellman1}). For the encoder learning, Bellman equation is supervised by Gaussian policy $\pi_{\theta}$, that is $V_{\pi_{\theta}} = R_{\pi_{\theta}} + \lambda P_{\pi_{\theta}}V_{\pi_{\theta}}$. We define $F({\boldsymbol{x}}) = R_{{\pi _{{\theta}}}} + \lambda P_{{\pi _{{\theta}}}}{\boldsymbol{x}}$, then for any $V_{{\pi _{{\theta}}}}^{(m)}$ and $V_{{\pi _{{\theta}}}}^{(k)}$ in set $V_{{\pi _{{\theta}}}}$, $V_{{\pi _{{\phi_n}}}}^{(i)}$ and $V_{{\pi _{{\phi_n}}}}^{(j)}$ in set $V_{{\pi _{{\phi_n}}}}$, one has (\ref{eq: covergence-encoder}). Specifically, inequality (a) comes from the 
triangle inequality, and for equality (b), the first term is due to policy $\pi_{\phi_n}$ making the decision, while similarly, the second term stems from the pivotal role played by policy $\pi_\theta$. Thus, we get the corollary. 
\end{proof}

\noindent \textbf{Remark:} As implied by Corollary \ref{cor:mycorollary1}, we optimize the encoder and decoders asynchronously with two different iteration counts, \ie, two coupled iterations that the decoders at \rxs ~ update more frequently while the encoder at \tx ~updates at a lower speed. Convergence of these interleaved iterations can be ensured by assuming that the update frequency of the encoder is considerably smaller (\ie, $\kappa$ smaller) than decoders, which allows decoders to converge first while being perturbed by the slower encoder\cite{holzleitner2021convergence}.

 \section{Performance Evaluation}
  \label{sec: Performance Evaluation}
 
 In this section, we compare the proposed SemanticBC-SCAL with the traditional reliable communication scheme and typical semantic communication algorithms under both AWGN and Rayleigh fading channels respectively. In addition, we also give an ablation study to comprehensively evaluate and analyze the performance.  
 \subsection{Simulation Settings}   
 
  \label{sec: Simulation Settings} 
We adopt the standard English version of the proceedings of the European Parliament \cite{koehn2005europarl} as the dataset, which consists of more than $2.0$ million sentences and $53$ million words. In our experiment, the dataset is pre-processed into sentences with the length of $4\sim30$ words, those shorter than 30 words being padded with zeros and the number of units $D$ in the hidden layer behind each word equals $128$. Furthermore, each word is converted to $30$ bits (\ie, $B=30$) by the quantization layer for further transmission. The pre-proposed dataset contains a total of $1,136,816$ sentences and $32,478$ words (dictionary size), which are then divided into training and testing with a ratio of $4:1$. Furthermore, the backbone of semantic encoder/decoders is composed of UT \cite{dehghani2018universal} while the channel encoder and decoder are based on dense layers. The detailed DNN structure and hyperparameters are summarized in Table \ref{structure}.

\begin{table}[t]
		\centering
		\caption{Network structure and hyperparameters used in SemanitcBC-SCAL}  
		\label{structure}		 
		\def\arraystretch{1.0}  
    \begin{tabular}{m{0.8cm} m{1.0cm}  m{1.9cm}  m{1.8cm}  m{1.2cm}} 	
			\hline
			&Modules &Layer&Dimensions & Activation\\ 
			\hline   
			\multirow{8}{0.8cm}{${\rm{TX}}$}&Input&${\boldsymbol{m}}$&30&$\backslash$	\\ 
                \hline 
                &$S{C_{{\rm{en}}}}( \cdot )$& Embedding \newline UT Encoder \newline Dense layer \newline Gaussian policy& $(30 \times 128)$\newline $(30\times128)$\newline $(30\times16)$ \newline $(30\times16)$& $\backslash$ \newline$\backslash$\newline ReLU \newline $\backslash$ \\ 
                & $Q$ & Dense layer \newline quantization &  $(30\times30)$\newline$(30\times30)$ & ReLU\newline $\backslash$ \\ 
                \hline 
                Channel& AWGN & $\backslash$&$\backslash$&$\backslash$\\ 
                \hline 
                \multirow{5}{0.8cm}{\textbf{$3\times{{\rm{R}}{{\rm{X}}_n}}$}}& $Q_n^{ - 1}( \cdot )$ & Dense layer&  $(30\times16)$ & ReLU\\ 
                &$S{C_{n,{\rm{de}}}}( \cdot )$ & Dense layer \newline UT Decoder \newline Dense layer  \newline Softmax & $ (30\times128)$\newline $(30\times128)$\newline $(30\times32,478)$ \newline $(32,478\times{{\hat T}_n})$ & ReLU\newline  ReLU\newline Dictionary \newline  Softmax\\	 
			\hline 
		\end{tabular}
\end{table}

As for the wireless channel, we consider the commonly used AWGN and Rayleigh fading channel as discussed in Section \ref{sec: system_model}. As a comparison, we utilize the following baselines. 
\begin{enumerate}
    \item Huffman-RS: A traditional communication system that source coding and channel coding are implemented as Huffman coding and Reed-Solomon (RS) \cite{reed1960polynomial} coding, respectively. Meanwhile, the coding length with RS is set to $(30\times 42)$ in $ 64$-QAM.  
    \item CE baseline: It refers to a degenerative point-to-point model that shares the same network structure as our proposed model. Put differently, a special case arises when SemanticBC-SCAL is optimized by the CE function.  
    \item LSTM: It optimizes the entire semantic system on top of RL but its backbone is built around LSTM (Long Short-Term Memory) networks \cite{lu2021reinforcement}.  
    \item DeepSC: It belongs to the pioneering work in semantic communication \cite{xie2021deep}, and leverages the transformer architecture optimized by CE function and mutual information. 
\end{enumerate}

\begin{figure*} [hb]
\subfigure[AWGN]{
  \begin{minipage}[t]{1.0\linewidth} 
    \centering 
    \includegraphics[width=180mm]{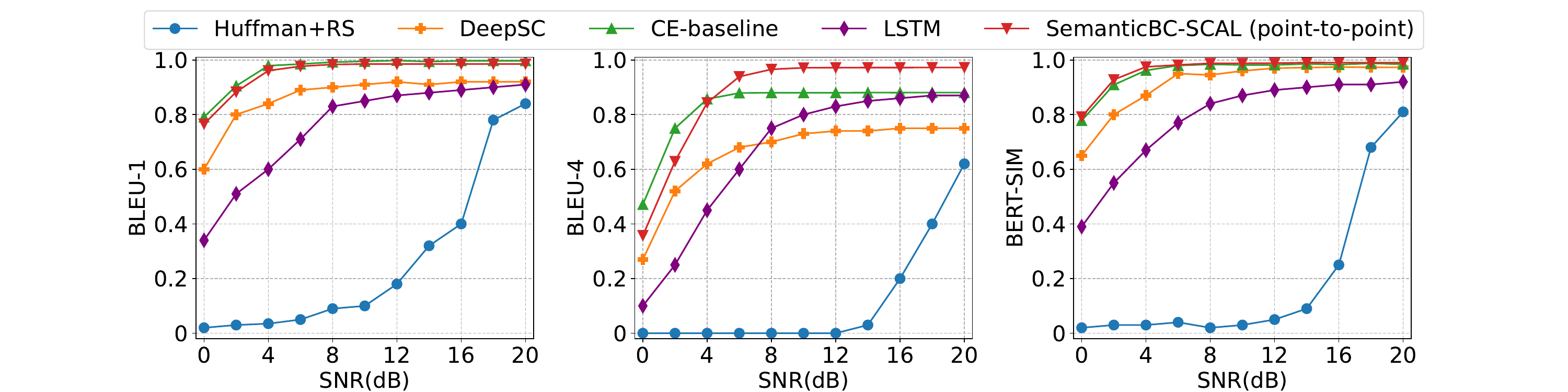}
  \end{minipage}}
\subfigure[Rayleigh Fading]{
  \begin{minipage}[t]{1.0\linewidth} 
    \centering 
    \includegraphics[width=180mm]{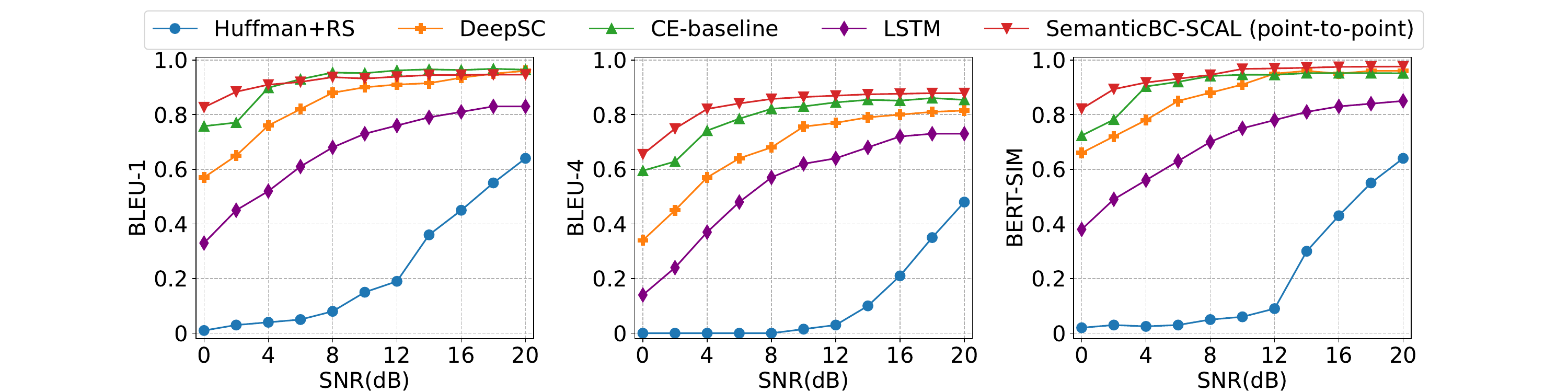}   
  \end{minipage} }  
  \caption{BLEU score and BERT-SIM versus SNR in AWGN and Rayleigh fading channel for the point-to-point case.}     
  \label{1-1-awgn-fading-bleu}  
\end{figure*} 

Moreover, for SemanticBC-SCAL, the pre-training process is implemented by utilizing the CE function to accelerate the training. 
 Afterwards, all pre-trained \rxs ~undergo the described training procedure in SemanticBC-SCAL until convergence. Besides, in order to characterize the inherently noisy channel environment, we have implemented two types of SNR simulation methods. Firstly, we set deterministic channel SNR \ie, $\mu _{\rm SNR}$ for each receiver to indicate practical channel conditions. Secondly, in broadcast cases, we add different noise variances $\delta _{\rm SNR}^2$ of SNR for different \rxs ~to reflect their heterogeneous estimation capabilities and adaptability. Specially, for both AWGN and Rayleigh fading channels, we set different mean values ${\mu _{\rm SNR}}$ and noise variance $\delta _{\rm SNR}^2$ during the training stage respectively, including $ {\mu _{{\rm{SNR},1}}} = 6$ dB, $\delta _{\rm {SNR},1} = 1$ dB; $\mu _{\rm{SNR},2} = 10$ dB, $\delta _{\rm{SNR},2} = 1$ dB; $\mu _{\rm{SNR},3} = 10$ dB, $\delta _{\rm{SNR},3} = 2$ dB. While for SemanticBC-SCAL in the point-to-point scenario, we set the SNR of the training process by default with $ \mu _{\rm{SNR}} = 10$ dB and $\delta _{\rm{SNR}} = 1$ dB. The related parameters are listed in Table \ref{default parameters}.  

\begin{table}[t]
		\centering
		\caption{The default parameters in SemanitcBC-SCAL}    
		\label{default parameters}		  
		\def\arraystretch{1.0}  
    \begin{tabular}{m{1cm} m{4cm}m{2cm} } 	
			\hline
			Parameters &Description&Value\\ 
                \hline
                $\backslash$&Batch size& $64$ \\ 
                $lr$&Learning rate & $1e-5$\\
                $K$&Number of parallel samples &$5$\\
                $\kappa$& Local iterations for decoders in each \textit{update cycle}&$1000$\\ 
                $E_p$ &No. of pre-training end epochs& $50$\\
                $E_e$&No. of end epochs&$180$\\               
                \hline 
		\end{tabular}
        \vspace{-0.35cm}
\end{table}

To be consistent with existing works on SemCom, we adopt BLEU scores \cite{papineni2002bleu}, BERT-SIM \cite{devlin2018bert}, and WAR to evaluate the system's performance. In particular, BLEU scores count the similarity of $n$-gram phrases from the recovered text and the referred ground truth, which can be typically denoted as BLEU-1, BLEU-2, BLEU-3, and BLEU-4 respectively. BERT-SIM calculates the cosine similarity of BERT embeddings after fine-tuning. In some sense, WAR shares some similarity with $1$-gram. Nevertheless, BLEU scores \cite{papineni2002bleu} consider the length of  sentences by introducing a brevity penalty. In other words, if the length of decoded sentences doesn't perfectly match the reference sentence, the computed BLEU score will be penalized.
In terms of semantic similarity metric $\Theta$ for reward training, we choose the BLEU score by combining 1-gram, 2-gram, 3-gram, and 4-gram settings with equal weights $0.25$, so as to provide semantic level supervision.

 \subsection{Numerical Results and Analysis}

 \subsubsection{Performance in Point-to-Point SemCom} 

We start with the performance comparison between the SemanticBC-SCAL with other baselines in the point-to-point case. In particular, 
we provide the performance in terms of BLEU scores and BERT-SIM under AWGN and Rayleigh fading channel in Fig. \ref{1-1-awgn-fading-bleu}.  
As shown in Fig. \ref{1-1-awgn-fading-bleu}(a),    
our proposed SemanticBC-SCAL outperforms CE baseline and LSTM, especially in terms of BLEU-4. Since a larger $n$-gram phrase carries more contextual information, the result demonstrates our approach is capable of preserving semantics and prone to provide a semantic level transmission. Furthermore, our approach also outperforms baselines in terms of BERT-SIM, allowing for accommodating some synonyms. Meanwhile, as illustrated in Fig. \ref{1-1-awgn-fading-bleu}(b), attributed to the volatile channel gain, the Rayleigh fading channel imposes a stronger detrimental impact compared to the AWGN channel on both the traditional method and SemCom models, leading to a more pronounced attenuation during the signal transmission.   

Furthermore, our method exhibits significant superiority over the traditional Huffman-RS method, and stands out among other DNN-based SemCom approaches, especially in low SNRs. Most impressively, our approach obtains a decoding accuracy that closely approximates $100\%$ at around ${\rm SNR}=8$ dB in AWGN channels. 
This improved performance can be primarily attributed to the adaptive transformer layer in the UT, which has the ability to reconstruct partial semantic information in a poor channel environment by leveraging shared background knowledge, thereby mitigating the distortion introduced by channel noise. Meanwhile, by adopting semantic similarity scores as rewards, our approach fosters an environment conducive to the system's learning and understanding of semantic representations at a semantic level, which addresses the $Q2$.

\begin{figure*} [ht]
\subfigure[AWGN]{
  \begin{minipage}[t]{1.0\linewidth} 
    \centering 
    \includegraphics[width=180mm]{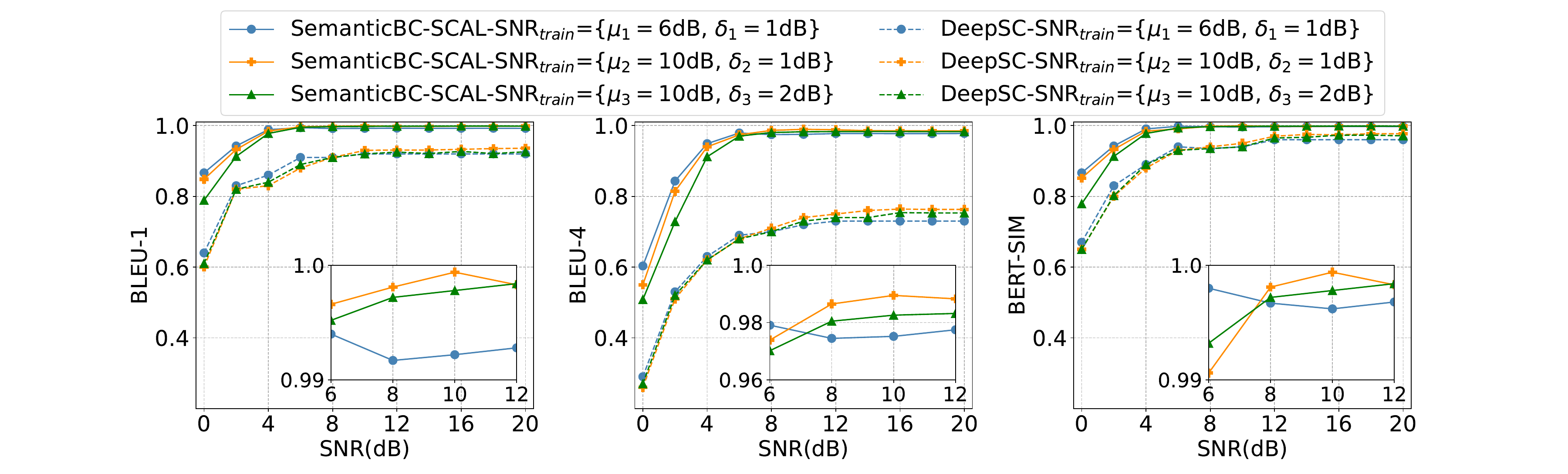}
  \end{minipage}}
\subfigure[Rayleigh Fading]{
  \begin{minipage}[t]{1.0\linewidth} 
    \centering 
    \includegraphics[width=180mm]{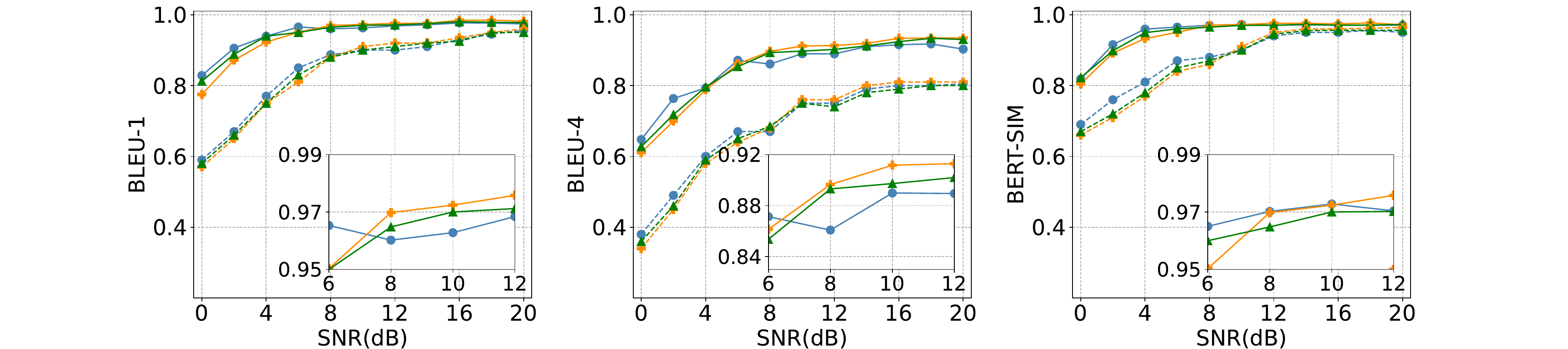}   
  \end{minipage} }  
  \caption{Comparison of Semantic BC schemes in terms of BLEU-1 score, BLEU-4 score, and BERT-SIM under both AWGN and Rayleigh fading channel.}     
  \label{1-3-awgn-fading} 
\end{figure*}

Notably, even if our approach has a lower BLEU-1 compared to the CE-baseline in Fig. \ref{1-1-awgn-fading-bleu}, the existence of synonyms in the recovered sentences actually leads to a slightly higher sentence similarity in terms of BLEU-4. More importantly, although BERT-SIM is not directly adopted as a reward function, the semantic level supervision under BLEU score also leads to the superiority of SemanticBC-SCAL in the point-to-point case and manifests its effectiveness.

 \subsubsection{Performance in Semantic BC} 
\label{sec: Performance in Point-to-Point SemCom} 

For semantic BC,  
in addition to the inherent channel noise as discussed in (\ref{eq:decoding}), we also consider the estimation variance related to SNRs to better adapt to the varying conditions. Consequently, during the training process, we employ lower SNR to acclimate to these demanding channel conditions, \ie, exemplified by scenarios, such as the AWGN and Raileigh channel under $ {\mu _{{\rm{SNR},1}}} = 6$ dB, $\delta _{\rm {SNR},1} = 1$ dB. 
Conversely, in favorable channel environment, we frequently employ higher SNR settings during training to guarantee robust performance across a substantial portion of the channel environment while maintaining a competitive level of resilience. Fig. \ref{1-3-awgn-fading} visualizes the corresponding BLEU-1, BLEU-4 and BERT-SIM in Semantic BC with three different receivers under both AWGN and Rayleigh channels. 
 
 \begin{table}[htb]
		\centering
		\caption{Comparisons on the varying number of RXs of SemanticBC-SCAL under AWGN channel, wherein all BLEU scores and WAR denote the averages with SNR ranging from $0\sim20$ dB.} 
		\label{varying users performance}

		\def\arraystretch{1.0} 
		
		\begin{tabular}{p{1.2cm}<{\centering}p{1cm}<{\centering}p{1cm}<{\centering}p{1cm}<{\centering}p{1cm}<{\centering}p{0.8cm}<{\centering}}			
			\hline
			Number of RXs&BLEU-1&BLEU-2&BLEU-3&BLEU-4&WAR\\ 
			\hline
			1&0.9528&0.9262&0.9004&0.8697&0.9272 \\ 
			3&0.9760&0.9583&0.9442&0.9209&0.9614  \\ 
			5&0.9624&0.9338&0.9120&0.8834&0.9466 \\ 
                7&0.9517&0.9288&0.9018&0.8811&0.9358 \\ 
			\hline
		\end{tabular}
\end{table}	

From Fig. \ref{1-3-awgn-fading}, it can be observed that our SemanticBC-SCAL demonstrates adaptability to the varying SNR, particularly excelling in low SNR regime. Notably, when the test SNR ranges from $0$ to $6$ dB, the \rxn ~trained under low SNR (i.e, $ {\mu _{{\rm{SNR},1}}} = 6$ dB, $\delta _{\rm {SNR},1} = 1$ dB) yields superior results than others for both AWGN and fading channels, while DeepSC in BC scenarios follows a similar trend but has an inferior performance.  
To elaborate, when the mean value of training SNR is configured at $ {\mu _{{\rm{SNR},1}}} =  10$ dB, 
for SemanticBC-SCAL in the AWGN channel, the training SNR with a narrower variance (i.e., $\delta _{\rm {SNR},1} = 1$ dB) outperforms that with a wider variance (i.e., $\delta _{\rm {SNR},1} = 2$ dB) in a low SNR regime, while in Rayleigh fading channels the opposite is true, thanks to its generalization capabilities. 
 Meanwhile, in the higher SNR regime, the performance of the channel trained under low SNR exhibits insubstantial improvement and the channel with lower variance estimation performs better.

 \begin{figure*} [ht]
\hspace{-3mm}
\subfigure[]{
  \begin{minipage}[t]{0.5\linewidth} 
    \centering 
    \includegraphics[width=70mm]{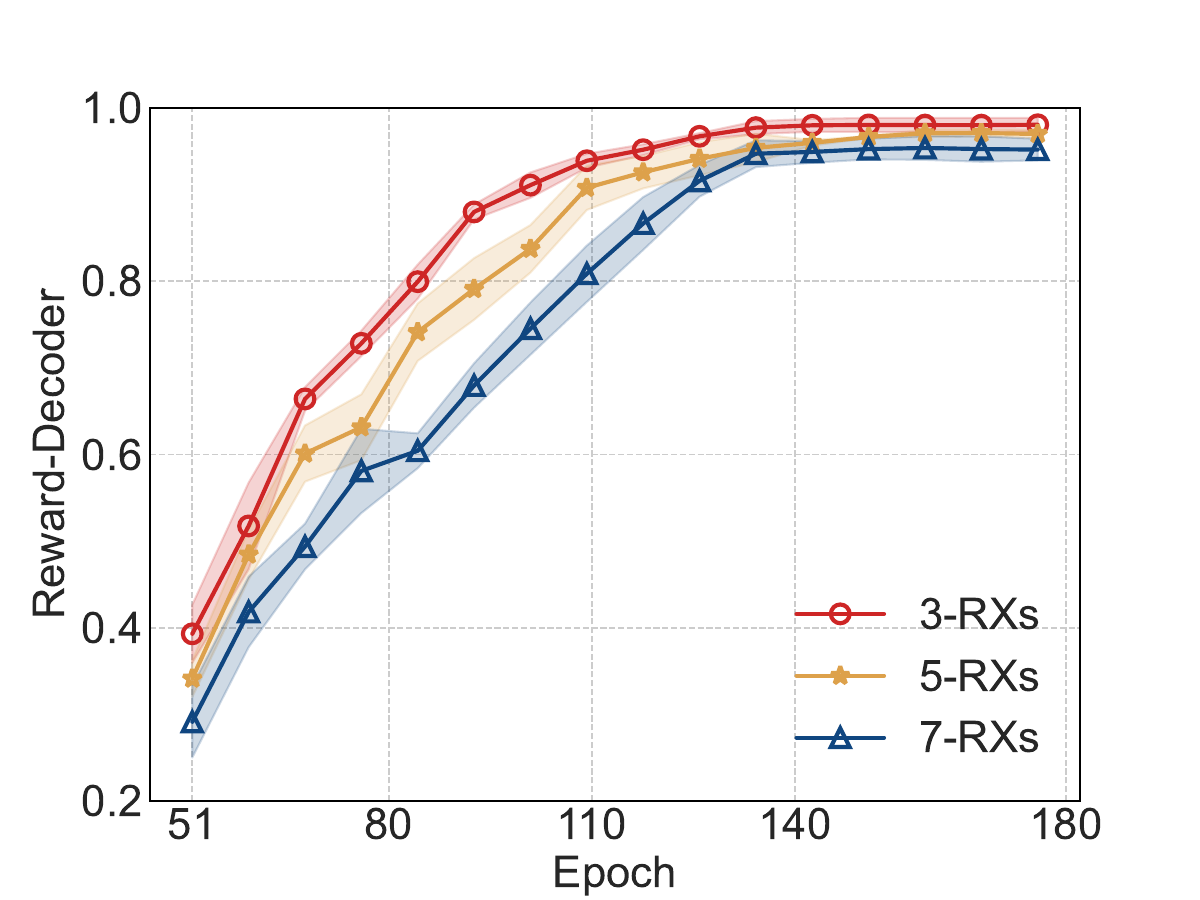} 
  \end{minipage}} \hspace{-1mm}
\subfigure[]{
  \begin{minipage}[t]{0.5\linewidth} 
    \centering 
    \includegraphics[width=70mm]{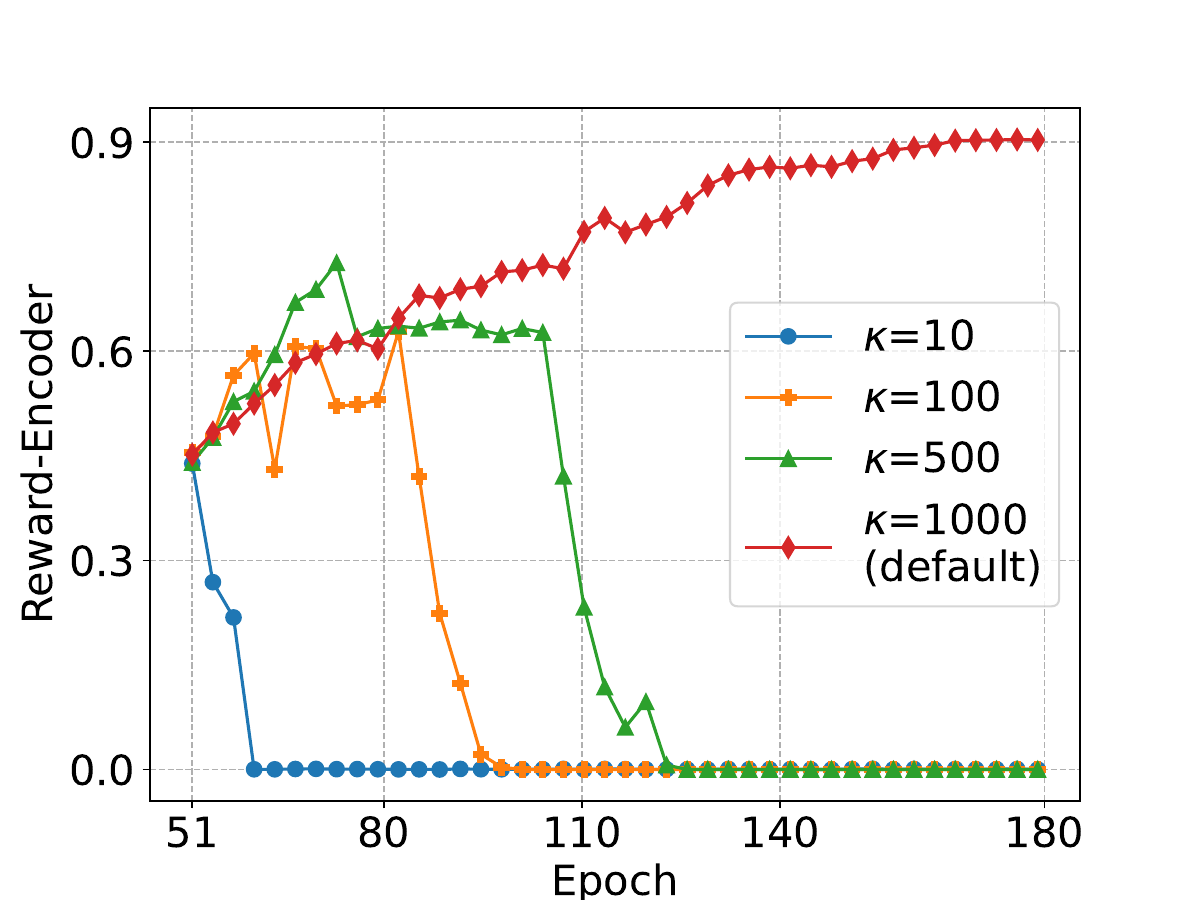} 
  \end{minipage} }
  \caption{Convergence of SemanticBC-SCAL, wherein the reward curves are the average BLEU score for all broadcast receivers. (a) The training reward of the decoder sides with varying numbers of \rxs. (b) The training reward of the encoder side with varying local iterations $\kappa$ under three \rxs.  
 }
  \label{decoder-encoder-reward}
\end{figure*}

Additionally, we also validate the scalability of our semantic BC system in scenarios with a varying number of \rxs, as depicted in Table \ref{varying users performance}. It is noteworthy that as the number of RX increases, there is a modest decrease in semantic decoding proficiency, but the overall BLEU scores and WAR consistently maintain a high level. Furthermore, when compared to the point-to-point system, our SemanticBC-SCAL demonstrates a notable advantage in utilizing semantics within the broadcast scenario. This advantage can be attributed to the localized adaptive learning and comprehensive integration of all decoders' outcomes. This indicates that our SemanticBC-SCAL can be effectively extended to semantic BC systems with varying numbers of \rxs, thereby addressing the $Q1$.

 \subsubsection{Convergence of SemanticBC-SCAL}   
 \label{sec: Convergence of Semantic-SCAL} 
In order to testify whether our alternate learning mechanism guarantees convergence, we have depicted the evolution of the reward for the encoder and decoders over epochs in Fig. \ref{decoder-encoder-reward}. From Fig. \ref{decoder-encoder-reward}(a), we can observe that consistent with Theorem \ref{thm:theorem3}, our alternate learning mechanism ultimately converges with a steady learning and low variance (among different receivers) for the different number of \rxs, which promises an effective learning approach for the large-scale knowledge base. 

Furthermore, as highlighted in the ``Remark'' accompanying Corollary $1$, the convergence of these interleaved iterations between one encoder and multiple decoders can be ensured by certain locally updated iterations $\kappa$. Consequently, we delve into exploring the influence of $\kappa$ on the convergence of the encoder, as illustrated in Fig. \ref{decoder-encoder-reward}(b). Based on the observations from Fig. \ref{decoder-encoder-reward}(b), it becomes apparent that when $\kappa$ is relatively small, indicating a low number of local iterations for decoders and frequent updates for the encoder per epoch. In this sense, the encoder fails to learn effectively even after a few training epochs. As $\kappa$ increases, the encoder can sustain longer training epochs but still struggles to converge, until $\kappa=1000$. Only at this point, it achieves continuous and stable learning, ultimately reaching the convergence, which aligns with those presented in Corollary $1$. These empirical findings clearly address the $Q1$ and $Q3$.

  \begin{figure}[t]
   \centering
   \setlength{\abovecaptionskip}{0.3cm}
		\includegraphics[width=0.85\linewidth]{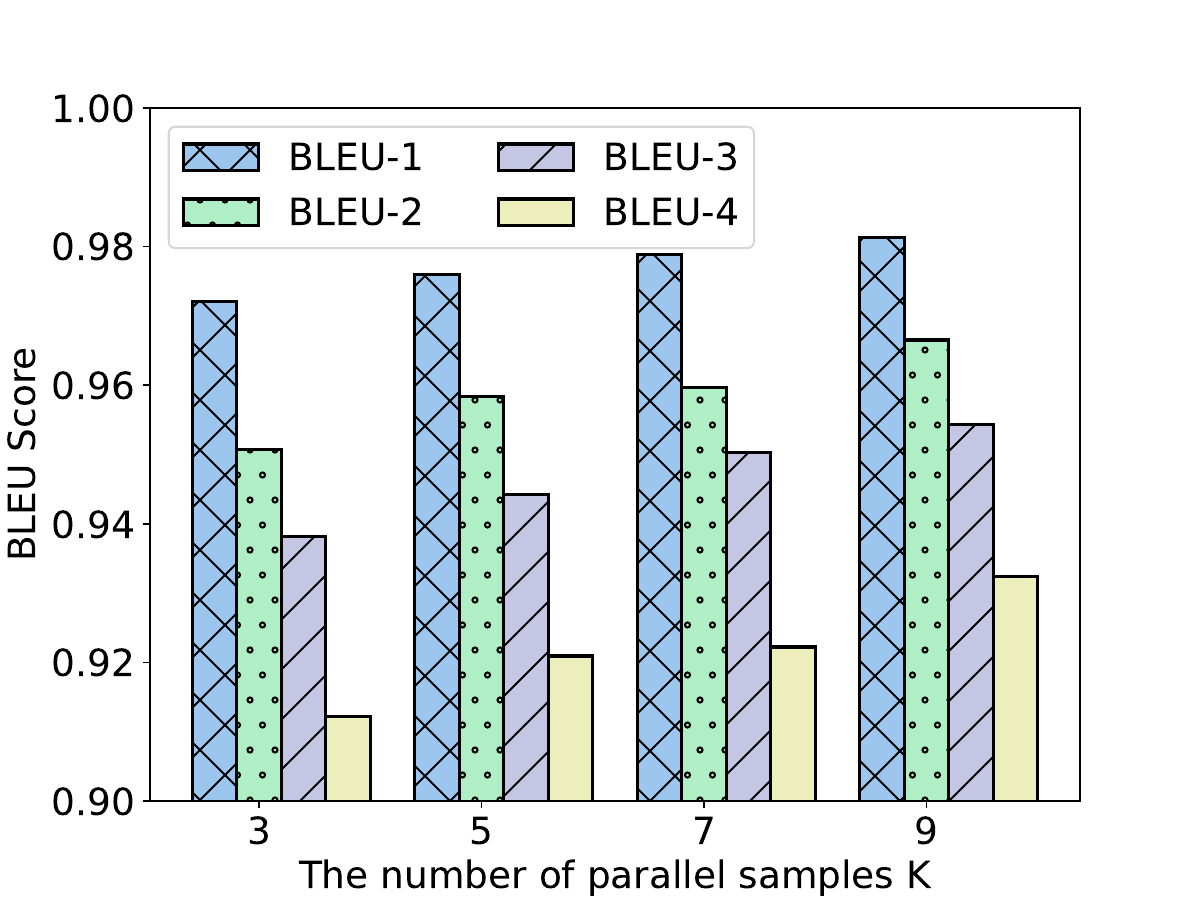}\\     
		\caption{The impact of the number of parallel samples $K$ in the SemanticBC-SCAL system.} 
		\label{sample-K} 
\end{figure}
 \subsubsection{Ablation Study}
 
The ablation study for comparing different numbers of parallel samples $K$ in (\ref{eq: decoder-gradient0}) and (\ref{eq: encoder-gradient1}) is given in Fig. \ref{sample-K}. According to (\ref{eq: decoder-gradient0}) and (\ref{eq: encoder-gradient1}), as $K$ increases, the mean reward from $K$ rollouts of state-action value can be estimated more precisely. Consistently, as depicted in Fig. \ref{sample-K}, a larger $K$ obtains a higher semantic similarity, which is consistent with the referred \cite{gao2019self}.

\section{Conclusions}
\label{sec: Conclusions}

In this paper, in order to cope with scalability, compatibility, and convergence issues in semantic broadcast communications, we have proposed a semantic broadcast communication framework optimized by an RL-based self-critical alternate learning, called SemanticBC-SCAL, which provides semantic level supervision and improves the robustness under varying numbers of \rxs. In particular, we have regarded semantic similarity as a reward and formulated the optimization process as an RL problem. Furthermore, we have adopted self-critical learning to obtain a simple and efficient gradient estimation, especially suitable for complex SemCom systems with large-scale samples. To accommodate varying numbers of \rxs~ and address non-differentiable channels, we have adopted a cost-effective alternate training mechanism to asynchronously learn the encoder and multiple decoders with different learning iterations instead of directly backpropagating gradients through the channel. Moreover, we have also delved into the convergence analysis of SemanticBC-SCAL and provided conditions for the convergence. Extensive simulation results have demonstrated that our proposed semantic broadcast communication system can achieve high semantic accuracy with different numbers of \rxs, and has exhibited strong scalability and robustness for different channel environments.


	\bibliographystyle{IEEEtran}
	\bibliography{Semantic_BC}

	
	
	
	
	
	\end{document}